\documentclass[a4paper,11pt]{article}

\usepackage{latexsym,graphicx,fullpage}
\usepackage{amsthm,amsmath,amssymb,enumerate}
\usepackage{bm}
\usepackage{ifpdf}
\usepackage{color}
\usepackage{algorithm}
\usepackage[noend]{algpseudocode}
\usepackage{nicefrac}
\usepackage{wrapfig}

\allowdisplaybreaks

\definecolor{Darkblue}{rgb}{0,0,0.4}
\definecolor{Brown}{cmyk}{0,0.81,1.,0.60}
\definecolor{Purple}{cmyk}{0.45,0.86,0,0}
\newcommand{\mydriver}{hypertex}
\ifpdf
 \renewcommand{\mydriver}{pdftex}
\fi
\usepackage[breaklinks,\mydriver]{hyperref}
\newcommand{\lref}[2][]{\hyperref[#2]{#1~\ref*{#2}}}

 \setlength{\parindent}{0pt}
 \setlength{\parskip}{5pt plus 1pt}

\newtheorem{theorem}{Theorem}[section]
\newtheorem{definition}[theorem]{Definition}

\newtheorem{lemma}[theorem]{Lemma}

\newtheorem{claim}[theorem]{Claim}

\newtheorem{corollary}[theorem]{Corollary}

\numberwithin{algorithm}{section}

\newenvironment{subproof}[1][\proofname]{%
  \begin{proof}[#1]%
}{%
  \end{proof}%
}

\newcommand{\junk}[1]{}
\newcommand{\ignore}[1]{}

\newcommand{\R}[0]{{\ensuremath{\mathbb{R}}}}


\newcommand{\eps}{\varepsilon}

\newcommand{\E}{\mathbb{E}}

\newcounter{note}[section]

\newcommand{\qedsymb}{\hfill{\rule{2mm}{2mm}}}

\newcommand{\initOneLiners}{%
    \setlength{\itemsep}{0pt}
    \setlength{\parsep }{0pt}
    \setlength{\topsep }{0pt}
}

\newcommand{\squishlist}{
 \begin{list}{$\bullet$}
  { \setlength{\itemsep}{0pt}
     \setlength{\parsep}{3pt}
     \setlength{\topsep}{3pt}
     \setlength{\partopsep}{0pt}
     \setlength{\leftmargin}{1.5em}
     \setlength{\labelwidth}{1em}
     \setlength{\labelsep}{0.5em} } }

\newcommand{\squishend}{
  \end{list}  }



\begin{document}

\title{{\bf On a generalization of iterated and randomized rounding}
  \thanks{CWI and TU Eindhoven, Netherlands. Email: {\tt bansal@gmail.com}. Supported by the ERC Consolidator Grant 617951
  and the NWO VICI grant 639.023.812.
	This work was done in part while the author was visiting the \emph{Bridging
      Discrete and Continuous Optimization} program at the Simons Institute for the Theory of Computing.}
}
\author{Nikhil Bansal}
\date{}
\maketitle

 \begin{abstract}
We give a general method for rounding linear programs that combines the commonly used iterated rounding and randomized rounding techniques. In particular, we show that whenever iterated rounding can be applied to a problem with some {\em slack}, there is a randomized procedure that returns an integral solution that satisfies the guarantees of iterated rounding and also has concentration properties. 
We use this to give new results for several classic problems such as rounding column-sparse LPs, makespan minimization on unrelated machines, degree-bounded spanning trees and multi-budgeted matchings.
\end{abstract}



\section{Introduction}
A powerful approach in approximation algorithms is to formulate the 
problem at hand as a $0$-$1$ integer program
and consider some efficiently solvable relaxation for it. Then, given
some fractional solution $x\in [0,1]^n$ to this relaxation, apply a suitable
{\em rounding} procedure to $x$ to obtain an integral $0$-$1$ solution.
Arguably the two most basic and extensively studied techniques for rounding such relaxations are randomized rounding and iterated rounding.

{\bf Randomized Rounding.} Here, the fractional values $x_i \in [0,1]$ are
interpreted as probabilities, and used to round the variables independently
to $0$ or $1$.
A key property of this rounding is that each linear constraint
is preserved in expectation, and its value is tightly  concentrated around
its mean as given by Chernoff bounds, or more generally
Bernstein's inequality (definitions in Section \ref{s:prel}).
Randomized rounding is well-suited to problems where the  constraints do not have much structure, or when they are {\em soft} and some error can be tolerated. Sometimes these errors can be fixed by applying problem-specific alteration steps.
We refer to \cite{SW11, Vazirani01} for various applications of randomized rounding.

{\bf Iterated Rounding.} This on the other hand, is based on linear algebra and is useful for problems with {\em hard} combinatorial constraints or when the constraints have some interesting structure. Here, the rounding proceeds in several iterations $k=0,1,2,\ldots$, until
all variables are rounded to $0$ or $1$. Let $x^{(k)} \in \R^n$ denote the  
solution at the beginning
of iteration $k$, and let $n_k$ denote the number
of fractional variables in $x^{(k)}$  (i.e.~those strictly between $0$ and $1$). 
Then one (cleverly) chooses some collection of linear constraints on these $n_k$ fractional variables, say specified by rows of the 
matrix $W^{(k)}$, with dimension
$\dim(W^{(k)}) \leq n_k-1$, and updates the solution as $x^{(k+1)} = x^{(k)} + y^{(k)}$ by some
arbitrary non-zero vector $y^{(k)}$ satisfying $W^{(k)} y^{(k)}=\bf{0}$ so that some
fractional variable reaches $0$ or $1$.
The process is then iterated with $x^{(k+1)}$.
Note that once a variable reaches $0$ or $1$ it stays fixed.

Despite its simplicity, this method is extremely powerful and
most basic results in combinatorial optimization
such as the integrality of matroid, matroid-intersection and non-bipartite
matching polytopes follow very cleanly using this approach.
Similarly, several breakthrough results for problems such
as degree-bounded spanning trees, survivable network design
and rounding for column-sparse LPs
were obtained by this method. An excellent reference is \cite{Lau-book}.

\subsection{Our Results}
Motivated by several problems that we describe in Section \ref{s:appl}, 
a natural question is whether the strengths these of two seemingly different techniques can be combined to give a more powerful rounding approach.

Our main result is that such an algorithm exists, and we call it {\em sub-isotropic rounding}. In particular, it combines iterated and randomized rounding in a completely generic way and significantly the extends
the scope of previous dependent rounding techniques.
Before describing our result, we need some definitions.

Let $X_1,\ldots,X_n$ be  independent $0$-$1$ random variables with mean $x_i = \E[X_i]$ and $a_1,\ldots,a_n$ be arbitrary reals (possibly negative), then the sum
$S =\sum_i a_i X_i$ is concentrated about its mean and satisfies the following tail bound \cite{BLM13},
\begin{equation}
\label{eq:berns}
\text{(Bernstein's inequality)}  \quad \quad \Pr[ S - \E[S] \geq t] \leq \exp\left(-\frac{t^2}{2(\sum_i a_i^2(x_i-x_i^2) + Mt/3)}\right)
 \end{equation}
where $M = \max_{i} |a_i|$. The lower tail follows by applying the above to $-X_i$, and the standard Chernoff bounds correspond to \eqref{eq:berns} when $a_i\in[0,1]$ for $i\in [n]$ (details in Section \ref{s:prel}).

The following relaxation of Bernstein's inequality will be extremely relevant for us.
\begin{definition}[$\beta$-concentration] Let $\beta \geq 1$. For a vector valued random variable $X=(X_1,\ldots,X_n)$ where $X_i$ are possibly dependent $0$-$1$ random variables, we say that $X$ is $\beta$-concentrated around the mean $x = (x_1,\ldots,x_n)$ where $x_i = \E[X_i]$,
if for every $a \in \R^n$, $a \cdot X$ is well-concentrated and satisfies Bernstein's inequality  up to factor $\beta$ in the exponent, i.e.~
\begin{equation}
 \Pr[ a \cdot X- \E[a \cdot X] \geq t] \leq \exp\left(-\frac{t^2/\beta}{2(\sum_i a_i^2(x_i-x_i^2) + Mt/3)}\right)
 \end{equation}
\end{definition}

{\bf Main result.} We show that whenever iterated rounding can be applied to a problem so that in any iteration $k$, there is some {\em slack} in the sense that  
$\dim(W^{(k)}) \leq (1-\delta) n_k$ for some $\delta>0$,
then $O(1/\delta)$-concentration can be achieved for free.
More precisely, we show the following.

\begin{theorem}
\label{thm:main}
 Let $P$ be a problem for which there is an iterated
rounding algorithm $A$,
that at iteration $k$, chooses a subspace $W^{(k)}$ with 
$\dim(W^{(k)}) \leq  (1-\delta)n_k$, where $\delta>0$. 
Then there is a polynomial time randomized algorithm that given a starting solution $x \in [0,1]^n$, returns $X \in \{0,1\}^n$ such that
\begin{itemize}
\item With probability $1$, $X$ satisfies all the guarantees of
the iterated rounding algorithm $A$.
\item $\E[X_i]=x_i$ for every variable $i$, and $X$ is $\beta$-concentrated with $\beta = 20/\delta$.
\end{itemize}
\end{theorem}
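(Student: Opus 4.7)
The plan is to replace each deterministic update $y^{(k)}$ in standard iterated rounding by a zero-mean random step (for concreteness, a short Brownian-motion increment) whose covariance matrix $\Sigma_k$ satisfies three key properties: (i) $\mathrm{range}(\Sigma_k) \subseteq \mathrm{null}(W^{(k)})$, so every linear constraint imposed by $A$ is preserved with probability one; (ii) the process is run until at least one new coordinate reaches $\{0,1\}$, which ensures termination in at most $n$ ``rounding epochs'' and inherits all deterministic guarantees of $A$; and (iii) a \emph{sub-isotropic} bound $\Sigma_k \preceq \beta \cdot \mathrm{diag}(\Sigma_k)$ holds pointwise, where $\mathrm{diag}(\Sigma_k)$ is the diagonal matrix of per-coordinate variances. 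Since each increment is zero-mean, the iterates $X^{(k)}$ form a bounded martingale in $[0,1]^n$, which gives $\E[X_i^{(\mathrm{final})}] = x_i$ and, together with (i)--(ii), yields all the deterministic guarantees of $A$ with probability one.

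The heart of the argument is the following Key Lemma: for every constraint matrix $W$ acting on $n_k$ fractional coordinates with $\dim(\mathrm{null}(W)) \geq \delta n_k$, there exists a nonzero PSD matrix $\Sigma$ with $\mathrm{range}(\Sigma) \subseteq \mathrm{null}(W)$ satisfying $\Sigma \preceq \beta \cdot \mathrm{diag}(\Sigma)$ for $\beta = O(1/\delta)$. I would prove this by an SDP / reweighting argument. Writing $\Sigma = V V^T$ with the columns of $V$ in $U := \mathrm{null}(W)$, the sub-isotropy condition becomes the frame bound $\sum_i \tilde r_i \tilde r_i^T \preceq \beta I$ on the normalized rows $\tilde r_i = r_i/\|r_i\|$ of $V$. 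The slack gives $\sum_i (P_U)_{ii} = \mathrm{tr}(P_U) \geq \delta n_k$, so the average diagonal of the projector is at least $\delta$; an iterative reweighting that dampens coordinates with small $(P_U)_{ii}$ (equivalently, solving the convex program that maximizes $\mathrm{tr}(\Sigma)$ subject to the sub-isotropic constraint) yields $\beta = 20/\delta$. Controlling the conditioning of this reweighting while preserving positive trace is the main technical obstacle.

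Given the Key Lemma, concentration follows from a martingale Bernstein (Freedman) inequality applied to $M_k = a \cdot X^{(k)}$. Its increments are bounded by $M = \max_i |a_i|$, and by sub-isotropy the predictable quadratic variation obeys $\langle M \rangle_\infty = \sum_k a^T \Sigma_k a \leq \beta \sum_i a_i^2 \sum_k \Sigma_k^{(ii)}$. A standard calculation shows that $X_i^{(k)}(1-X_i^{(k)}) + \sum_{j<k} \Sigma_j^{(ii)}$ is itself a martingale, and since $X_i^{(\mathrm{final})} \in \{0,1\}$ this yields $\E[\sum_k \Sigma_k^{(ii)}] = x_i(1-x_i)$. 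Plugging this bound into the exponential super-martingale underlying Freedman's inequality, and absorbing the resulting constants into $\beta = 20/\delta$, produces the stated $\beta$-concentration.
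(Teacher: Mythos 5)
Your high-level architecture is the same as the paper's: replace the deterministic iterated-rounding step by a zero-mean random step whose covariance $\Sigma_k$ is supported on the nullspace of $W^{(k)}$, is sub-isotropic ($\Sigma_k \preceq O(1/\delta)\,\mathrm{diag}(\Sigma_k)$), and has non-trivially large trace so that the walk makes progress. Your ``Key Lemma'' is precisely the result the paper imports as a black box from Bansal--Garg (its Theorem 2.4): for $\dim(W)\leq(1-\delta)n_k$ one can find a PSD $U$ with $\langle ww^T,U\rangle=0$ for $w\in W$, $U_{ii}\leq 1$, $\mathrm{Tr}(U)\geq a n_k$ and $U \preceq \eta\,\mathrm{diag}(U)$ whenever $1/\eta + a \leq \delta$; the trace condition is what makes each ``epoch'' terminate in polynomial time via an energy-increase argument, so that part of your plan is sound (though the paper does not reprove the lemma, and your reweighting sketch is left as the ``main technical obstacle'').

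The genuine gap is in the final concentration step. Freedman's inequality bounds $\Pr[\,M_T-M_0\geq t \text{ and } \langle M\rangle_T\leq\sigma^2\,]$; it requires a pathwise (almost sure) cap on the predictable quadratic variation, not a bound in expectation. Your martingale identity only gives $\E\bigl[\sum_k \Sigma_k^{(ii)}\bigr]=x_i(1-x_i)$; pathwise one has $\sum_k \Sigma_k^{(ii)} = x_i(1-x_i)-X_i(1-X_i)+(\text{a mean-zero fluctuation})$, and the realized variation can exceed its mean by a large factor with probability that Markov only bounds polynomially, which is fatal for an $\exp(-\Omega(t^2/\cdot))$ tail. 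The paper closes exactly this hole by folding the variance compensator into the potential: it tracks $Z_k=\sum_i a_i x_i^{(k)}+\lambda\sum_i a_i^2\,x_i^{(k)}(1-x_i^{(k)})$, proves the pathwise drift condition $\E_{k-1}[Z_k-Z_{k-1}]\leq -(\lambda/8\eta)\,\E_{k-1}[(Z_k-Z_{k-1})^2]$ (so the decrease of the remaining-variance term pays for the quadratic variation of the linear term on every sample path), applies a Freedman-type lemma to $Z_k$ itself to get $\Pr[Z_T-Z_0\geq t]\leq e^{-\lambda t/8\eta}$, and only then optimizes $\lambda=t/(t+2v)$ with $v=\sum_i a_i^2 x_i(1-x_i)$ to recover the Bernstein form with $\beta\approx 18\eta=20/\delta$. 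You need this compensator trick (or an explicit stopping-time/truncation handling of the event that the quadratic variation overshoots), and without it the step ``plug $\E[\sum_k\Sigma_k^{(ii)}]=x_i(1-x_i)$ into the exponential supermartingale'' does not yield the claimed $\beta$-concentration; relatedly, the constant $20/\delta$ cannot be read off from the covariance bound alone, since the martingale analysis itself loses a further constant factor.
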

A simple example in Appendix \ref{a:tight} shows that the dependence $\beta = \Omega(1/\delta)$ cannot be improved beyond constant factors. 

The generality of Theorem \ref{thm:main} directly gives new results for several problems where iterated rounding gives useful guarantees. All one needs to show is that the original iterated rounding argument for the problem
can be applied with some slack, which is often straightforward and only worsens the approximation guarantee slightly. Before describing these applications in Section \ref{s:appl}, we discuss some prior work on dependent rounding to place our results and techniques in the proper context.

\subsection{Comparison with Dependent Rounding Approaches}
Motivated by problems that involve both soft and hard constraints, there has been extensive work on developing {\em dependent rounding techniques}, that round the fractional solution in some correlated way to satisfy both the hard constraints and ensure some concentration properties.
Such problems arise naturally in many ways. E.g.~the hard constraints might arise from an underlying combinatorial object such as spanning tree or matching that needs to be produced, and the soft constraints may arise due to multiple budget constraints, or when the object to be output is used as input to another problem and needs to satisfy  additional properties, see e.g.~\cite{CVZ10,CVZ11,GKPS06, AGMOS10}.

Some examples of such methods include swap rounding \cite{CVZ10,CVZ11}, randomized pipage \cite{AFK,Srin01,GKPS06,HO14},  maximum-entropy sampling \cite{AGMOS10,SV14,AS10},
rounding via discrepancy \cite{LM12, R12, BN16} and gaussian random walks \cite{PSV17}.
A key idea here is that the weaker property of  
{\em negative} dependence (instead of independence) also suffices to get concentration. There is a rich and deep theory of
negative dependence and various notions such as negative correlation, negative cylinder dependence,
negative association, strongly rayleigh property and determinantal measures, imply interesting concentration properties \cite{Pem00, BBL09, DP09}.
This insight has been extremely useful and for many general problems such as those involving assignment or matroid polytopes, one can exploit the underlying combinatorial structure to design rounding approaches that ensure negative dependence between all or some  suitable collection of random variables.  

{\bf Limitations.} Even though very powerful and ingenious, these methods are also limited by the fact that requiring negative dependence can substantially restrict the kinds of rounding steps that can be designed, and the problems they can be applied to.
Moreover, even when such a rounding is possible, it typically requires a lot of creativity and careful understanding of the problem structure to come up with the rounding.

{\bf Our approach.} In contrast, Theorem \ref{thm:main} makes no assumption on the structure of the problem and by working with the more relaxed notion of  $\beta$-concentration, we can get rid of the need for negative dependence.
Moreover, our algorithm needs no major ingenuity to apply, 
and for most problems minor tweaks to previous iterated rounding algorithms suffice to create some {\em slack}.

\subsection{Motivating problems and Applications}
\label{s:appl}
We now give several applications and briefly discuss why they seem beyond the reach of current dependent rounding methods.

\subsubsection{Rounding Column-sparse LPs}
Let $x \in [0,1]^n$ be some fractional solution satisfying $Ax = b$, where $A \in \R^{m \times n}$ is an $m \times n$ matrix. 
The celebrated Beck-Fiala algorithm \cite{BF81} gives an integral solution $X$ so that $|AX - Ax|_\infty \leq t$, where $t$ is the maximum $\ell_1$ norm of the columns of $A$. This is substantially better than randomized rounding for small $t$.

Many problems however, involve both some column-sparse constraints that come from the underlying combinatorial problem, and some general arbitrary constraints which might not have much structure. This motivates the following natural question.

{\bf The problem.} Let $M$ be a linear system with two sets of constraints given by matrices $A$ and $B$, where $A$ is column-sparse, while $B$ is arbitrary. Given some fractional solution $x$, can we round it to
get error $O(t)$ for rows of $A$, while doing no worse than randomized rounding for $B$?

Note that simply applying iterated rounding on the rows of $A$ gives no control on the error for $B$. Similarly, just doing randomized rounding will not give $O(t)$ error for $A$. Also as $A$ and $B$ can be completely arbitrary, previous negative dependence based techniques do not seem to apply.


 {\bf Solution.} We show that a direct modification of the Beck-Fiala argument gives slack $\delta$, for any $\delta \in [0,1)$, while worsening the   error bound slightly to $t/(1-\delta)$. Setting, say $\delta=1/2$ and applying Theorem \ref{thm:main} gives $X \in \{-1,1\}^n$ that (i) has error at most $2t$ for rows of $A$, (ii) satisfies $\E[X_i]=x_i$ and is $O(1)$-concentrated, thus giving similar guarantees as randomized rounding for the rows of $B$.
In fact, the solution produced by the algorithm will satisfy concentration for all linear constraints and not just for the rows of $B$.

{\bf Koml\'{o}s Setting.} We also describe an extension to the so-called Koml\'{o}s setting, where the error depends on the maximum $\ell_2$ norm of columns of $A$.

These results are described in Section \ref{s:bf}.



\subsubsection{Makespan minimization on unrelated machines}
The classic makespan minimization problem on unrelated machines is the following. Given $n$ jobs and $m$ machines, where each job $j \in [n]$ has arbitrary size $p_{ij}$ on machine $i\in [m]$, assign the jobs to machines to minimize the maximum machine load. In a celebrated result, \cite{LST} gave a rounding method with additive error $p_{\max} := \max_{ij} p_{ij}$. 
In many practical problems however, there are other soft resource constraints and side constraints that are added to the  fractional formulation.
So it is useful to find a rounding that satisfies these approximately but still violates the main load constraint by only $O(p_{\max})$.
This motivates the following natural problem.

{\bf The Problem.}
Given a fractional assignment $x$, find an integral assignment $X$ with additive error $O(p_{\max})$ and that satisfies $\E[X_{ij}]=x_{ij}$ and
concentration for all linear functions of $x_{ij}$.

Questions related to finding a good assignment with some concentration properties have been studied before \cite{GKPS06,AFK,CVZ11}, and several methods such as randomized pipage and swap rounding have been developed for this.  
However, these methods crucially rely on the underlying matching structure and round the variables alternately along cycles, which limits them in various ways: either they give partial assignments, or only get concentration for edges incident to a vertex.

{\bf Solution.} We show that the iterated rounding proof of the result of \cite{LST} can be easily modified to  work for any slack $\delta \in [0,1/2)$ while giving additive error $p_{\max}/(1-2\delta)$. Theorem \ref{thm:main} (say, with $\delta=1/4$), thus gives a solution that has additive error at most 
$2p_{\max}$ and satisfies $O(1)$-concentration.

The result also extends naturally to the $k$ resource setting, where $p_{ij}$ is a $k$-dimensional vector. These results are described in Section \ref{s:makespan}

\subsubsection{Degree-bounded Spanning Trees and Thin Trees}
In the minimum cost degree-bounded spanning tree problem, we are 
given an undirected graph $G=(V,E)$ with edge costs $c_e \geq 0$ for $e \in E$, and integer degree bounds $b_v$ for $v \in V$, 
and the goal is to find a minimum cost spanning tree satisfying the degree bounds.
In a breakthrough result, Singh and Lau \cite{SL07} gave an iterated rounding algorithm that given any fractional spanning tree $x$, finds a spanning tree $T$ with cost at most $c \cdot x$ and degree violation $+1$. 

The celebrated thin-tree conjecture (details in Appendix \ref{a:thintree}) asks if given a fractional spanning tree $x$, there is a spanning tree $T$ satisfying $ \delta_T(S) \leq \beta \delta_x(S)$ for every $S\subset{V}$, where $\beta=O(1)$. Here $\delta_T(S)$ is the number of edges of $T$ crossing $S$, and $\delta_x(S)$ is the $x$-value crossing $S$.

 The result of \cite{SL07} implies that the degree $\delta_T(v) \leq 2 \delta_x(v)$ of every vertex $v$.
However, despite remarkable progress \cite{AO15}, the best known algorithmic results for the thin-tree problem give $\beta=O(\log n /\log \log n)$ \cite{AGMOS10,CVZ10,SV14,HO14}. 
The motivates the following natural question as a first step towards the thin-tree conjecture.

{\bf The Problem.} Can we find a spanning tree with $\beta=O(1)$ for single vertex cuts and $\beta=O(\log n/\log \log n)$ for general cuts?

The current algorithmic methods for thin-trees crucially rely on the negative dependence properties of spanning trees, which do not give anything better for single vertex cuts
(e.g.~even if $b_v=2$ for all $v$, by a balls and bins argument a random tree will have maximum degree $\Theta(\log n/\log \log n)$). On the other hand, if we only care about single vertex cuts the methods of \cite{SL07} do not  give anything for general cuts.

{\bf Solution.} We show that the iterated rounding algorithm of \cite{SL07} can be easily modified to create slack $\delta \in (0,1/2)$ while violating the degree bounds by at most $2/(1-2\delta)$. Applying Theorem \ref{thm:main} with $\delta = 1/6-\epsilon$ thus gives a distribution supported on trees with degree violation $+2$, and has $O(1)$ concentration. By  a standard cut counting argument \cite{AGMOS10}, the concentration property implies $O(\log n/\log n \log n)$-thinness for every cut.

We describe these results in Section \ref{s:spanning-tree}, where in fact we consider the more general 
setting of the minimum cost degree bounded matroid basis problem.

\subsubsection{Multi-budgeted bipartite matchings}
In the above examples, it was relatively easy to create slack 
since the number of hard combinatorial constraints were bounded a constant factor away from $n_k$,
and the slack could be introduced in the {\em soft} constraints (e.g.~machine load or vertex degrees) while worsening the approximation slightly. 

As a different type of illustrative example, we now consider the perfect matching problem in bipartite graphs. Here, and more generally in matroid intersection, one needs to maintain tight rank constraints for two matroids, which typically requires $n-1$ linearly independent constraints for $n$ elements, and it is not immediately clear how to introduce slack.

{\bf Problem.} Let $G=(V,E)$ be a bipartite graph with $V=L\cup R$ and $|L|=|R|$, and given a fractional perfect matching $x$ defined by $\sum_{e \in \delta(v)} x_e =1 $ for all $v\in V$, and $x_e \in [0,1]$ for all $e\in E$. Can we round it to a perfect or almost perfect matching while satisfying $O(1)$-concentration.

Building on the work of \cite{AFK}, \cite{CVZ11} designed a beautiful randomized swap rounding procedure that for any $\delta>0$,
finds an almost perfect matching where each vertex is matched with probability at least $1-\delta$, and satisfies $O(1/\delta)$-concentration. They also extend this result to non-bipartite matchings and matroid intersection. 

We give an alternate proof of this result using our framework.
Our proof is different from that in \cite{CVZ11} and is more in the spirit of iterated rounding where we carefully choose the set of constraints to maintain as the rounding proceeds. This is described in Section \ref{s:matching}.

\subsection{Overview of Techniques}
We now give a high level overview of our algorithm and  analysis.
The starting observation is that randomized rounding can be viewed as a iterative algorithm, by doing a standard Brownian motion on the cube as follows.
Given $x^{(0)}$ as the starting fractional solution, consider a random walk in the $[0,1]^n$ cube starting at $x^{(0)}$, with tiny step size $\pm \gamma$ chosen independently for each coordinate, where upon reaching a face of the cube (i.e.~some $x_i$ reaches $0$ or $1$) the walk stays on that face.
The process stops upon reaching some vertex $X=(X_1,\ldots,X_n)$ of the cube. By the martingale property of random walks, the probability 
that $X_i=1$ is exactly $x_i^{(0)}$ and as the  walk in each coordinate is independent, $X$ has the same distribution on $\{0,1\}^n$ as under randomized rounding.

Now consider iterated rounding, and recall that here the update $y^{(k)}$ at step $k$ must lie in the nullspace of $W^{(k)}$. So a natural first idea to combine this with randomized rounding, is to do a random walk in the null space of $W^{(k)}$ until some variable reaches $0$ or $1$. The slack condition $\text{dim}(W^{(k)}) \leq (1-\delta) n_k$ implies that the nullspace has at least $\delta n_k$ dimensions, which could potentially give ``enough randomness" to the random walk.

It turns out however that doing a standard random walk in the null space of $W^{(k)}$ does not work. The problem is that as the constraints defining $W^{(k)}$
can be completely arbitrary in our setting, the random walk can lead to very high correlations between certain subsets of coordinates causing the $\beta$-concentration property to fail. For example, suppose $\delta=1/2$ and $W^{(0)}$ consists of the $n/2$ constraints $x_1=x_2,  x_2=x_3,\ldots, x_{n/2-1} = x_{n/2}$. Then the random walk will update $x_{n/2},\ldots,x_n$ independently, but for $x_1,\ldots,x_{n/2}$ the updates must satisfy $\Delta x_1 = \ldots = \Delta x_{n/2}$, and hence will be completely correlated. So the linear function $x_1 + \ldots + x_{n/2}$ will have very bad concentration (as all the  variables will simultaneously rise by $-\delta$ or by $+\delta$).

To get around this problem, we design a different random walk that is correlated, but still looks like an almost independent walk  in {\em every} direction. 
More formally, consider a random vector $Y=(Y_1,\ldots,Y_n)$, where $Y_i$ are mean-zero random variables. We say
that $Y$ is $\eta$-almost pairwise independent
if for every $a=(a_1,\ldots,a_n) \in \R^n$,
 \[ \E[ ( a \cdot Y)^2  ]  =  \E [(\sum_i a_i Y_i)^2 ]\leq  \eta  \sum_i a_i^2 \E[Y_i^2] \]
If $Y_1,\ldots,Y_n$ are pairwise independent, note that the above holds as equality  with $\eta=1$, and hence  this can be viewed as a relaxation of pairwise independence.
We show that whenever $\dim(W^{(k)}) \leq (1-\delta) n_k$, there exist $\gamma$-almost pairwise independent random updates $y^{(k)}$ that lie in the null space of $W^{(k)}$ with $\gamma \approx 1/\delta$.
Moreover these updates can be found by solving a semidefinite program (SDP). 

Next, using a variant of Freedman's martingale analysis, we show that applying these almost pairwise independent random  
updates (with small increments) until all the variables reach $0$-$1$, gives an integral solution that satisfies $O(\eta)$-concentration.  

These techniques are motivated by our recent works  
on algorithmic discrepancy \cite{BDG16, BG17}.
While discrepancy is closely related to rounding 
\cite{LSV86, R12}, a key difference in discrepancy is that the error for
rounding a linear system $Ax=b$ depends on the $\ell_2$ norms of the coefficients of the constraints and not
on $b$. E.g.~suppose $x \in [0,1]^n$ satisfies  $x_1 + \ldots + x_n = \log n$, then the sum stays
$O(\log n)$ upon randomized rounding with high probability, while
using discrepancy methods directly gives $\Omega(\sqrt{n})$ error, which would be unacceptably large in this setting. So our results can be 
viewed as using techniques from discrepancy to obtain bounds that are sensitive to $x$. 
Recently, this direction was explored in \cite{BN16}
but their method gave much weaker results and applied to very limited settings.

\section{Technical Preliminaries}
\label{s:prel}
\subsection{Probabilistic tail bounds and Martingales}
The standard Bernstein's probabilistic tail bound for independent random variables  is the following.
\begin{theorem}
\label{thm:berns}(Bernstein's inequality.) Let $Y_1,\ldots,Y_n$ be independent random variables,
with $|Y_i - \E[Y_i]| \leq M$ for all $i \in [n]$.
Let $S = \sum_i Y_i$ and $t>0$. Then, with $\sigma_i^2 = \E[Y_i^2] - \E[Y_i]^2 $ we have,
\[ \Pr[ S - \E[S] \geq t] \leq \exp\left(-\frac{t^2}{2(\sum_i \sigma_i^2 + Mt/3)}\right) \]
\end{theorem}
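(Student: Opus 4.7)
The plan is to follow the classical Chernoff-style moment-generating function (MGF) argument, specialized to bounded variables. Write $Z_i = Y_i - \E[Y_i]$, so $\E[Z_i]=0$, $|Z_i|\le M$, $\E[Z_i^2]=\sigma_i^2$, and $S-\E[S]=\sum_i Z_i$. For any $\lambda>0$, Markov's inequality applied to $e^{\lambda(S-\E[S])}$ gives
\[
\Pr[S-\E[S]\ge t] \;\le\; e^{-\lambda t}\prod_{i=1}^n \E\!\left[e^{\lambda Z_i}\right],
\]
using independence of the $Y_i$ in the last step. The entire proof then reduces to (i) bounding each single-variable MGF $\E[e^{\lambda Z_i}]$ in terms of $\sigma_i^2$ and $M$, and (ii) optimizing $\lambda$.

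For (i), I would Taylor-expand and use the bound $|Z_i|^k \le M^{k-2} Z_i^2$ for $k\ge 2$, which holds because $|Z_i|\le M$. Taking expectations term by term,
\[
\E[e^{\lambda Z_i}] \;\le\; 1 + \sigma_i^2 \sum_{k\ge 2}\frac{\lambda^k M^{k-2}}{k!}.
\]
The key elementary inequality is $\sum_{k\ge 2} x^k/k! \le \tfrac{x^2/2}{1-x/3}$ for $0\le x<3$, which yields $\E[e^{\lambda Z_i}] \le \exp\!\bigl(\tfrac{\lambda^2\sigma_i^2/2}{1-\lambda M/3}\bigr)$ (using $1+u\le e^u$) whenever $\lambda M<3$. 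Multiplying over $i$ and writing $\sigma^2 := \sum_i \sigma_i^2$,
\[
\Pr[S-\E[S]\ge t] \;\le\; \exp\!\left(-\lambda t + \frac{\lambda^2 \sigma^2/2}{1-\lambda M/3}\right).
\]

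For (ii), I would pick $\lambda = t/(\sigma^2 + Mt/3)$, which lies in the valid range $\lambda M<3$ and is essentially the minimizer of the right-hand side. A direct calculation then collapses the exponent to $-t^2/\bigl(2(\sigma^2+Mt/3)\bigr)$, giving the stated bound. The only genuinely delicate step is the elementary series inequality $\sum_{k\ge 2} x^k/k! \le (x^2/2)/(1-x/3)$: I would verify it by comparing the tail $\sum_{k\ge 2} x^k/k!$ to the geometric series $\tfrac{x^2}{2}\sum_{k\ge 0}(x/3)^k$ term-wise, using $k!\ge 2\cdot 3^{k-2}$ for $k\ge 2$. Everything else is routine Chernoff bookkeeping, and no structural ideas beyond independence and the $|Z_i|\le M$ hypothesis are needed.
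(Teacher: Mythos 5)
Your proof is correct: it is the standard Chernoff--Cram\'er argument for Bernstein's inequality (bounding each centered MGF via $|Z_i|^k \le M^{k-2}Z_i^2$ and the series bound $\sum_{k\ge 2} x^k/k! \le (x^2/2)/(1-x/3)$, then optimizing $\lambda = t/(\sigma^2+Mt/3)$), and all steps, including the valid range $\lambda M<3$ and the final algebra, check out. The paper itself gives no proof of this theorem --- it cites it as a standard result from the literature --- and your argument is exactly the classical proof found there, so there is nothing to compare beyond noting agreement.
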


The lower tail follows by applying the above to $-Y_i$, so we only consider the upper tail.
As we will be interested in bounding  $S =\sum_i a_i X_i$, where the random variables
$X_i$ are $0$-$1$ and the $a_i$ are arbitrary reals (possibly negative), we will use the form given by \eqref{eq:berns}. 

The well-known Chernoff bounds correspond to the special case of \eqref{eq:berns} when $a_i \in [0,1]$ for $i \in [n]$. In particular, setting $t= \eps \mu$ with $\mu  = \E[S] = \sum_i a_i x_i$, $M = 1$ and using that $\sum_i a_i^2 (x_i - x_i^2)  \leq \sum_i a_i^2 x_i \leq \sum_i a_i x_i \leq \mu$ in
\eqref{eq:berns}, we get
\begin{equation}
\label{eq:upper}
 \Pr[X  \geq (1+\epsilon) \mu ] \leq \exp \left( -\epsilon^2 \mu/(2 ( 1+\eps/3)) \right).
\end{equation}

Remark: For $\epsilon>1$, the bound \eqref{eq:upper} can be improved slightly to $ \left(e^\eps/((1+\eps)^{1+\eps})\right)^\mu$
by optimizing the choice of parameters in the proof. In this regime, an analogous version of Theorem \ref{thm:berns} is called Bennett's inequality (\cite{BLM13}), and similar calculations also give such a variant of Theorem \ref{thm:main}. As this is completely standard, we do not discuss this here.

We will use the following Freedman-type martingale inequality.
\begin{lemma}
\label{lem:prel2}
Let $\{Z_k: k=0,1,\ldots,\}$ be a sequence of random variables with $Y_k:=Z_k - Z_{k-1}$, such that $Z_0$ is deterministic and $Y_k\leq 1$ for all $k=1,\ldots$. If  for all  $k=1,2,\ldots,$
\[\E_{k-1}[Y_k] \le  -\alpha \E_{k-1}[Y_k^2]\] 
where $\E_{k-1}[\ \cdot \ ]$ denotes $ \E[\ \cdot \ | Z_1,\dots, Z_{k-1}]$.
Then for all $0< \alpha  < 1$ and $t \geq 0$, it holds that  
\[ \Pr[Z_k  - Z_0 > t] \leq \exp(-\alpha t).\]
\end{lemma}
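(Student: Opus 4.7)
My plan is to use the standard exponential moment (MGF) method that underlies Freedman-type inequalities. I would define
\[
W_k := \exp(\alpha Z_k),
\]
and show that $\{W_k\}$ is a nonnegative supermartingale with respect to the filtration generated by $Z_1,\ldots,Z_k$, i.e.\ $\E_{k-1}[W_k] \le W_{k-1}$. Given this, the tail bound is immediate: since $W_0 = e^{\alpha Z_0}$ is deterministic, iterating gives $\E[W_k] \le W_0$, and Markov's inequality yields
\[
\Pr[Z_k - Z_0 > t] = \Pr\!\left[W_k > e^{\alpha(Z_0+t)}\right] \le e^{-\alpha(Z_0+t)} \E[W_k] \le e^{-\alpha t}.
\]

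To establish the supermartingale property, note that $W_k = W_{k-1} \cdot e^{\alpha Y_k}$ and $W_{k-1}$ is measurable with respect to $(Z_1,\ldots,Z_{k-1})$, so it suffices to prove $\E_{k-1}[e^{\alpha Y_k}] \le 1$. For this I would use the elementary numerical inequality
\[
e^x \le 1 + x + x^2 \qquad \text{for all } x \le 1.
\]
Since $Y_k \le 1$ and $0 < \alpha < 1$, we have $\alpha Y_k \le \alpha \le 1$, so applying this pointwise with $x = \alpha Y_k$ and taking conditional expectations yields
\[
\E_{k-1}[e^{\alpha Y_k}] \le 1 + \alpha\,\E_{k-1}[Y_k] + \alpha^2\,\E_{k-1}[Y_k^2].
\]
Now the hypothesis $\E_{k-1}[Y_k] \le -\alpha\,\E_{k-1}[Y_k^2]$ is perfectly tailored to cancel the quadratic term:
\[
\E_{k-1}[e^{\alpha Y_k}] \le 1 - \alpha^2\,\E_{k-1}[Y_k^2] + \alpha^2\,\E_{k-1}[Y_k^2] = 1,
\]
as required.

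There is no real obstacle in this proof; the only thing that requires verification is the numerical bound $e^x \le 1 + x + x^2$ for $x \le 1$. This follows by setting $g(x) := e^x - 1 - x - x^2$, noting $g(0)=g'(0)=0$, that $g$ is increasing on $(-\infty,0)$ (since $g'(x)=e^x-1-2x$ is decreasing where $e^x<2$ and equals $0$ at $x=0$), and checking that $g(1)=e-3<0$, so $g\le 0$ on $(-\infty,1]$. Conceptually, the design of the drift/variance hypothesis is exactly what is needed to annihilate the quadratic Taylor contribution in the MGF expansion, which is why the resulting exponent is the clean linear $\alpha t$ rather than the quadratic-in-$t$ expression typical of variance-based Bernstein/Freedman bounds — this is useful in the applications in this paper, where $Z_k$ will be constructed so that a negative drift proportional to the instantaneous variance is enforced by design.
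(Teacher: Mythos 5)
Your proof is correct and follows essentially the same route as the paper: the exponential supermartingale $e^{\alpha Z_k}$ combined with Markov's inequality, where the drift hypothesis cancels the quadratic term in a Taylor-type bound on the moment generating function. The only cosmetic difference is that you apply $e^x \le 1+x+x^2$ (for $x\le 1$) directly to $x=\alpha Y_k$, whereas the paper first proves the slightly sharper bound $\E[e^{\lambda X}]\le \exp\bigl(\lambda\E[X]+(e^{\lambda}-\lambda-1)\E[X^2]\bigr)$ (its Lemma \ref{lem:prel1}) and then invokes $e^{\alpha}\le 1+\alpha+\alpha^2$ for $0\le\alpha\le 1$ — the two steps you merged into one.
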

Before proving Theorem \ref{lem:prel2}, we first need a simple lemma.
\begin{lemma}
\label{lem:prel1}
If $X\le 1$ and $\lambda >0$, 
$ \E[e^{\lambda X}] \le \exp\left(\lambda \E[X] +  (e^{\lambda} -\lambda-1 )\E[X^2]\right)$.
\end{lemma}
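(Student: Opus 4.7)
The plan is to reduce the moment generating function bound to a pointwise inequality and then apply monotonicity of the exponential. Specifically, I will first establish that for every real $x \leq 1$ and every $\lambda > 0$,
\[
e^{\lambda x} \;\leq\; 1 + \lambda x + (e^\lambda - \lambda - 1)\, x^2.
\]
Once this is in hand, taking expectations gives $\E[e^{\lambda X}] \leq 1 + \lambda \E[X] + (e^\lambda - \lambda - 1) \E[X^2]$, and the elementary inequality $1 + u \leq e^u$ applied to $u = \lambda \E[X] + (e^\lambda - \lambda - 1) \E[X^2]$ yields the stated bound. Both of these final steps are completely mechanical.

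The only substantive work is therefore the pointwise inequality. My plan is to write $e^y = 1 + y + g(y)\, y^2$ where $g(y) := (e^y - 1 - y)/y^2$, extended continuously at $0$ by $g(0) = 1/2$, and then show that $g$ is nondecreasing on all of $\R$. This is most transparent from the Taylor expansion $g(y) = \sum_{k\geq 0} y^k/(k+2)!$, but can be verified rigorously by a short computation: after clearing denominators, the sign of $g'(y)$ matches that of $h(y)/y^3$ with $h(y) := (y-2)e^y + y + 2$, and one checks $h(y) \geq 0$ for $y \geq 0$ and $h(y) \leq 0$ for $y \leq 0$ by differentiating twice, using $h(0) = h'(0) = 0$ and $h''(y) = y e^y$. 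Substituting $y = \lambda x$ and using that $\lambda x \leq \lambda$ whenever $x \leq 1$ and $\lambda > 0$, monotonicity gives $g(\lambda x) \leq g(\lambda) = (e^\lambda - \lambda - 1)/\lambda^2$, which rearranges to the desired pointwise bound.

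The main (and quite minor) obstacle is thus the monotonicity of $g$; there is no conceptual difficulty. It is also worth noting that the hypothesis $X \leq 1$ is used in exactly one place, namely to guarantee $\lambda x \leq \lambda$ so that $g$ can be evaluated no further than at $\lambda$; this is precisely what produces the coefficient $e^\lambda - \lambda - 1$ rather than a larger number in the final bound.
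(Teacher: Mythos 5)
Your proof is correct and follows essentially the same route as the paper: establish the pointwise inequality $e^{\lambda x}\le 1+\lambda x+(e^\lambda-\lambda-1)x^2$ via the monotonicity of $(e^y-1-y)/y^2$ (the paper works with $f(x)=(e^{\lambda x}-\lambda x-1)/x^2=\lambda^2 g(\lambda x)$, which is the same statement after the substitution $y=\lambda x$), then take expectations and apply $1+u\le e^u$. The only difference is that you actually prove the monotonicity via $h(y)=(y-2)e^y+y+2$ with $h(0)=h'(0)=0$, $h''(y)=ye^y$, whereas the paper merely asserts ``it can be verified''; your argument checks out.
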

\begin{proof}
Let $f(x) = (e^{\lambda x} - \lambda x - 1)/x^2$, where we set $f(0)=\lambda^2/2$. It can be verified that $f(x)$ is increasing for all $x$, which implies that for any $x \le 1$, $e^{\lambda x} \leq 1 + \lambda x  + f(1)x^2 = 1 + \lambda x +   (e^{\lambda} -\lambda-1 )x^2.$ 
Taking expectations and using that $1+x \leq e^x$ for all $x \in \R$ this gives,
\[ \E[e^{\lambda X}]  \leq  1+\lambda\E[ X]+(e^{\lambda} -\lambda-1 )\E [X^2]   \leq \exp(\lambda \E[X] +(e^{ \lambda} -\lambda-1 )\E [X^2]). \qedhere\]
\end{proof}
\begin{proof}(Lemma \ref{lem:prel2})
By Markov's inequality, 
\[ \Pr[Z_k - Z_0 > t ]  = \Pr[\exp(\alpha (Z_k-Z_0)) \geq \exp(\alpha t)] \leq \frac{\E[\exp(\alpha (Z_k-Z_0))] }{\exp(\alpha t)}\]
so it suffices to show that $\E[ \exp(\alpha (Z_k-Z_0)) ]  \leq 1$. As $Z_0$ is deterministic,  this is same as $ \E[\exp(\alpha Z_k)] \leq  \exp(\alpha Z_0)$. Now,
\begin{eqnarray*}
&&\E_{k-1}\left[e^{\alpha Z_k}\right] =  e^{\alpha Z_{k-1}}\E_{k-1}\left[e^{\alpha(Z_k-Z_{k-1})}\right]  =  e^{\alpha Z_{k-1}}\E_{k-1}\left[e^{\alpha Y_k}\right] \\
& \le &  e^{\alpha Z_{k-1}} \exp\left( \alpha \E_{k-1}[Y_k] +(e^{ \alpha} -\alpha-1 )\E_{k-1}\left[Y_k^2\right]\right)    \qquad \textrm{(Lemma \ref{lem:prel1})}\\
 & \le & e^{\alpha Z_{k-1}}\exp\left((e^{ \alpha} -\alpha^2 - \alpha-1 )\E_{k-1}\left[Y_k^2\right]\right)  \leq e^{\alpha Z_{k-1}}   \qquad \textrm{(as $e^\alpha \leq 1 + \alpha + \alpha^2$ for $0 \leq \alpha \leq 1$)} 
\end{eqnarray*}
As this holds for all $k$, using that $\E[\ \cdot \  ] = \E_0[\E_1[ \cdots \E_{k-1}[\ \cdot \  ]]]$ gives the result.
\end{proof}

%
%
%

\subsection{Semidefinite Matrices}
\label{s:psd}
Let $M_n$ denote the class of all symmetric  $n\times n$ matrices with real entries.
For two matrices $A,B \in \mathbb{R}^{n \times n}$,
the trace inner product of $A$ and $B$ is defined as 
$\langle A, B\rangle = \textrm{tr}(A^TB) = \sum_{i=1}^n \sum_{j=1}^n a_{ij}b_{ij}.$
A matrix $U \in M_n$ is positive semidefinite (psd) if all its eigenvalues are non-negative and we note this by $U \succeq 0$.  
Equivalently, $U \succeq 0$ iff   $a^TUa = \langle aa^T, U \rangle \geq 0$ for all $a \in \R^n$.

For, $U \succeq 0$ let $U^{1/2} = \sum_i \lambda_i^{1/2} u_iu_i^T$, where $
U = \sum_i \lambda_i u_iu_i^T$ is the spectral decomposition of $Y$ with orthonormal eigenvectors $u_i$. Then $U^{1/2}$ is psd and $U = V^T V$ for $V= U^{1/2}$.
For $Y,Z \in M_n$, we say that $Y \preceq Z$ if $Z-Y  \succeq 0$.

\subsection{Approximate independence and  sub-isotropic random variables}
Let $Y=(Y_1,\ldots,Y_n)$ be a random vector with $Y_1,\ldots,Y_n$ possibly dependent.
\begin{definition}[$(a,\eta)$ sub-isotropic random vector] For $ a \in (0,1]$ and $ \eta \geq 1$,  We say that $Y$ is $(a,\eta)$ sub-isotropic if it satisfies the following conditions.
\begin{enumerate} 
\item $\E[Y_i]=0$ and $\E[Y_i^2 ] \leq 1$ for all $i \in [n]$, and $\sum_{i=1}^n  \E[Y_i^2] \geq a n$.
\item For all $c = (c_1,\ldots,c_n) \in \R^n$ it holds that 
\begin{equation}
\label{eq:almost-indep}
     \E [( \sum_{i=1}^n c_i Y_i)^2   ] \leq \eta \sum_{i=1}^n c_i^2 \ \E[Y_i]^2.
\end{equation}
\end{enumerate}
\end{definition}
Note that if $Y_1,\ldots,Y_n$ are independent then \eqref{eq:almost-indep} holds with equality for $\eta =1$.

Let $U \in M_n$ be the $n\times n$ covariance matrix of $Y_1,\ldots,Y_n$. That is, $U_{ij} = \E[Y_i Y_j]$. Every covariance matrix is psd as 
$ c^T U c =\E[ (\sum_i c_i Y_i)^2 ] \geq 0$ for all $c \in \R^n$.
Let $\text{diag}(U)$ denote the diagonal $n\times n$ matrix with entries $U_{ii}$, then 
\eqref{eq:almost-indep} can be written as $c^T (\eta \text{ diag}(U) - U ) c \geq 0$ for every $c \in \R^n$, and hence 
equivalently expressed as 
\[  U \preceq \eta  \text{ diag}(U).\]

{\bf Generic construction.} We will use the following generic way to produce $(a,\eta)$ sub-isotropic vectors. Let $U$ be a $n\times n$ PSD matrix satisfying: $U_{ii} \leq 1$, $\text{Tr}(U) \geq an$ and $ U \preceq \eta \text{ diag}(U)$. Let $r \in \R^n$ be  a random vector where each coordinate is independently and uniformly $\pm 1$. Then it is easily verified that 
$Y = U^{1/2}r$ is an $(a,\eta)$ sub-isotropic random vector, as its covariance vector $\E[YY^T]= U^{1/2} \E[rr^T] (U^T)^{1/2} = U$.

{\em Remark:} Typically $r$ is chosen to be a random Gaussian, but random $\pm 1$ will be more preferable for us as it is bounded and this makes some technical arguments easier.   

We will need the following result from \cite{BG17}, about finding sub-isotropic random vectors orthogonal to a subspace.
\begin{theorem}[\cite{BG17}]
\label{thm:bg17}
Let $W \subset \R^n$ be a subspace with dimension $dim(W)=\ell =  (1-\delta) n$. Then for any $a>0 $ and $\eta>1$ satisfying $1/\eta + a \leq \delta$, there is a $n \times n$ PSD matrix $U$ satisfying the following: (i) $\langle  ww^T,U \rangle =0$  for all $w \in W$, (ii)
$U_{ii} \leq 1$ for all $i\in [n]$, (iii) $\text{Tr}(U) \geq an$ and  (iv) $ U \preceq \eta \text{ diag}(U)$.
\end{theorem}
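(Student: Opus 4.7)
The strategy is to exhibit $U$ via a semidefinite program and verify its feasibility by SDP duality. Let $V \in \R^{n\times \ell}$ have orthonormal columns spanning $W^\perp$, where $\ell = \delta n$, so $V^T V = I_\ell$ and $P := V V^T$ is the orthogonal projection onto $W^\perp$. Writing $v_i := V^T e_i \in \R^\ell$, every PSD matrix supported on $W^\perp$ has the form $U = V A V^T$ for some PSD $A \in M_\ell$, making (i) automatic. In these variables, (ii)--(iv) translate respectively to $v_i^T A v_i \leq 1$ for all $i$, $\operatorname{Tr}(A) \geq a n$, and $V A V^T \preceq \eta \operatorname{diag}(V A V^T)$. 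Note $\sum_i v_i v_i^T = V^T V = I_\ell$ and $\sum_i \|v_i\|^2 = \ell = \delta n$, so the ``leverage scores'' $P_{ii} = \|v_i\|^2$ average to $\delta$.

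As a warm-up, if all $P_{ii}$ were exactly $\delta$, the choice $A = (1/\delta) I_\ell$ would produce $U = P/\delta$ with uniform diagonal $U_{ii} = 1$, trace $\ell/\delta = n$, and (iv) reducing to $\eta \geq 1/\delta$, which is implied by the hypothesis $1/\eta + a \leq \delta$ together with $a > 0$. The difficulty in the general case is that leverage scores can be highly nonuniform, with some $P_{ii}$ possibly zero. I would handle this by casting the existence question as the SDP
\[ \max_{A \succeq 0}\ \operatorname{Tr}(A) \quad \text{s.t.}\quad v_i^T A v_i \leq 1 \ \forall i, \qquad V A V^T \preceq \eta \operatorname{diag}(V A V^T), \]
and proving its optimum is at least $an$ via its SDP dual. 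Introducing multipliers $\alpha \in \R_{\geq 0}^n$ for the scalar constraints and $Y \in M_n$ with $Y \succeq 0$ for the spectral constraint, a Lagrangian calculation shows the dual minimizes $\sum_i \alpha_i$ over feasibility
\[ I_\ell\ \preceq\ \sum_{i=1}^n (\alpha_i - \eta\, Y_{ii})\, v_i v_i^T + V^T Y V. \]

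The main obstacle is establishing that every such dual-feasible $(\alpha, Y)$ satisfies $\sum_i \alpha_i \geq a n$. Taking the trace of the PSD inequality and using $\sum_i v_i v_i^T = I_\ell$ and $\|v_i\|^2 \leq 1$ yields $\delta n = \ell \leq \sum_i \alpha_i + \langle Y,\ P - \eta\, \operatorname{diag}(P) \rangle$, so the game reduces to showing that the $Y$-dependent term contributes at most $(\delta - a) n$ across all $Y \succeq 0$ compatible with the constraint; equivalently, the spectral condition (iv) must genuinely limit how much the dual $Y$ can ``help''. I expect this to follow from an eigendecomposition $Y = \sum_k \mu_k y_k y_k^T$ that reduces to a scalar quadratic inequality relating $\|P y_k\|^2$, $y_k^T \operatorname{diag}(P) y_k$, and $\|y_k\|^2$, leveraging the projection property $P^2 = P$ together with Cauchy--Schwarz, with the $1/\eta$ term in the hypothesis emerging as the precise slackness in that spectral inequality. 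An alternative, more constructive route would iteratively apply a restricted-invertibility-type selection to find a subset of coordinates of size $\geq a n$ on which the uniform warm-up applies; I find the duality route more transparent for exhibiting where each hypothesis is consumed.
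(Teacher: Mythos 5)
This statement is imported from \cite{BG17} (the paper gives no proof of it), so your proposal must stand on its own, and as written it does not: it is a plan whose central step is missing. The setup is fine --- parametrizing $U = VAV^T$ so that (i) is automatic, the warm-up with uniform leverage scores $U = P/\delta$ (which indeed uses $1/\eta < \delta$), and the Lagrangian derivation of the dual condition $I \preceq \sum_i(\alpha_i - \eta Y_{ii})v_iv_i^T + V^TYV$ are all correct. But the entire content of the theorem is the claim you defer: that every dual-feasible $(\alpha,Y)$ has $\sum_i \alpha_i \geq an$ (plus the strong-duality/attainment issue needed to transfer this back to the primal, which is not automatic here since Slater can fail, e.g.\ when some $e_i \in W$ forces a zero diagonal entry). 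You only say you ``expect'' the key bound to follow from an eigendecomposition of $Y$, the identity $P^2 = P$, and Cauchy--Schwarz, with the $1/\eta$ slack ``emerging''; no such inequality is exhibited, and this is exactly where the theorem's hypothesis $1/\eta + a \leq \delta$ has to do its work.

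Moreover, the specific reduction you sketch cannot succeed in the form stated. After taking traces you need $\langle Y,\ P - \eta\,\mathrm{diag}(P)\rangle \leq (\delta - a)n$, but this quantity is not controlled by $Y \succeq 0$ together with pointwise facts about $P$: for a rank-one $Y = c\,yy^T$ with $y$ supported on a set $T$ of coordinates where the $v_i$ are nearly parallel and of small norm, one has $y^TPy \approx |T|\sum_i y_i^2 P_{ii}$, so the term is positive and scales linearly in $c$; only the coupling between $Y$ and $\alpha$ inside the semidefinite constraint can rein it in, and your argument never uses that coupling beyond a trace. The actual proof in \cite{BG17} goes through the same duality framework but hinges on a genuine spectral counting ingredient, roughly that for a PSD matrix the constraint $U \preceq \eta\,\mathrm{diag}(U)$ (equivalently, the dual term $Y - \eta\,\mathrm{diag}(Y)$) can only be violated/helpful on a subspace of dimension about $n/\eta$; the hypothesis $1/\eta + a \leq \delta$ is then a dimension budget ($(1-\delta)n$ dimensions for $W$, about $n/\eta$ for sub-isotropy, leaving $an$ of trace). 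Supplying and proving a lemma of this type (or an equivalent restricted-invertibility statement, which you mention only as an alternative gesture) is the missing core; without it the proposal is an outline, not a proof.
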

Note that the condition $\langle  ww^T,U \rangle = 0$ is equivalent to $w^T U w = \| U^{1/2} w \|^2=0$, which gives that $w^T U^{1/2}$ is the all-zero vector $\bf{0}$. So, for $Y= U^{1/2} r$,  $w$ is orthogonal to $Y$ as 
\[ w^T Y=  w^T U^{1/2} r = \bf{0} \cdot r = 0.\] 
So this gives a polynomial time algorithm to find a $(a,\eta)$ sub-isotropic random vector $Y \in \R^n$ such that $Y$ is orthogonal to the subspace $W$ with probability $1$.

\subsection{Formal Description of Iterated Rounding}
\label{s:iter-round}
By iterated rounding we refer to any procedure that works as follows.
Let $x$ be the starting fractional solution. We set $x^{(0)}=x$, and round it to a $0$-$1$ solution by
applying a sequence of updates as follows.
Let $x^{(k)}$ denote the solution at the beginning of iteration $k$. We say that variable $i \in [n]$
frozen if $x_i^{(k)}$ is $0$ to $1$, otherwise it is alive. Frozen variables are never updated anymore.  Let $n_k$ denote the number of alive variables.

Based on $x^{(k)}$, the algorithm picks a set of constraints of rank at most $r \leq n_k-1$,
given by the rows of some matrix $W^{(k)}$. It finds an
(arbitrary) non-zero direction $y^{(k)}$ such that $W^{(k)} y^{(k)} =0$ and $y^{(k)}_i=0$ if $i$ is frozen.
The solution is updated as $x^{(k+1)} = x^{(k)} + y^{(k)}$.

In typical applications of iterated rounding, 
$W^{(k)}$ is obtained by dropping one or more rows of $W^{(k-1)}$, and $x^{(k+1)}$ is obtained in a black-box way by solving the LP given by the constraints $W^{(k)} x^{(k+1)} = b^{(k)}$ where $b^{(k)} = W^{(k)} x^{(k)}$, restricted to  the alive variables
 and $x^{(k+1)} \in [0,1]^n$.
As $\text{dim}(W^{(k)}) <n_k $, at least one more variable in $X^{(k+1)}$ reaches
$0$ or $1$ and hence the algorithm terminates in at most $n$ steps.
However, as we will not work with basic feasible solutions, we will view the processing of generating the update $y^{(k)}$ by taking a step in the null space of $W^{(k)}$ as described above.

\section{Rounding Algorithm}

We assume that the problem to be solved has an iterated rounding procedure, as discussed in Section \ref{s:iter-round} that in any iteration $k$ specifies some subspace $W^{(k)}$ with $\dim(W^{(k)}) \leq (1-\delta) n_k$, and the update $y^{(k)}$ must satisfy $W^{(k)} y^{(k)}=0$
We now describe the rounding algorithm. 

{\bf Algorithm.} Initialize $x^{(0)}=x$, where $x$ in the starting fractional solution given as input. 
For each iteration $k=0,1,\ldots,$ repeat the following until all variables reach $0$ or $1$.

{\bf Iteration $k$.} Let $x^{(k)}$ be the current solution and let $A_k \subset [n]$ be the subset of coordinates $i$ for which $x^{(k)}_i \in (0,1)$. Call these alive variables and only the variables in $A_k$ will be updated henceforth. So for ease of notation we assume that $A_k = [n_k]$.
\begin{enumerate}
\item
Apply theorem \ref{thm:bg17}, with $n=n_k$, $W=W^{(k)}$, $a=\delta/10$ and $\eta= 10/(9\delta)$ to find the covariance matrix $U$.
\item  Let $\gamma = 1/(2n^{3/2})$. Let $r_k \in \R^{n_k}$ be a random vector with independent $\pm 1$ entries. Set 
\[x^{(k+1)} = x^{(k)} +  y^{(k)}  \quad \text{ with } \quad y^{(k)} = \gamma_k(U^{1/2} r_k)\] where 
$\gamma_k$ is the largest value  in $(0, \gamma]$, such that both
 $x^{(k)} + y^{(k)}$ and $x^{(k)} - y^{(k)}$ lie in  $[0,1]^{n_k}$.
\end{enumerate}

\section{Algorithm Analysis}
Let $X$ denote the final solution. The property that $\E[X_i]=x_i$ follows directly as the update $y^{(k)}$ at each time step has mean zero in each coordinate. 
As the algorithm always moves in the nullspace of $W^{(k)}$, clearly it will also satisfy the 
iterated rounding guarantee  with probability $1$.

To analyze the running time, we note that whenever $\gamma_k < \gamma$, there is at least $1/2$ probability that some new variable will reach $0$ or $1$ after the update (as the new solution is either $x^{(k)} + y^{(k)}$ or $x^{(k)} - y^{(k)}$ with probability half each). So, in expectation there are at most $2n$ such steps. So we focus on the iterations where $\gamma_k =\gamma$.

Let us define  the energy of $x^{(k)}$ as  $E^{(k)} := \sum_i (x_i^{(k)})^2$. Then after the update,
 $E^{(k)}$ rises is expectation by at least $\gamma^2 n_k a \geq \gamma^2 a$, as,
\begin{eqnarray*}
 \E_k[E^{(k+1)}]  - E^{(k)} & =  & \gamma^2 \E_k [ \sum_i (2 x_i(k) y_i(k) + y_i^2(k))] \\
& = & 
 \gamma^2 \sum_i \E_k[(y_i{(k)})^2] =  \gamma^2 \text{Tr}(U) \geq \gamma^2 a n_k \geq \gamma^2 a.
\end{eqnarray*}
where the second equality uses that $\E_{k}[y_i(k)] =0$, and the last step uses that $n_k \geq 1$.
By a standard argument \cite{B10}, this implies that the algorithm terminates in $O(n/\gamma^2)$ time with constant probability.

{\em Remark:}
One can also make the energy increase  deterministic and get an improved running time of $O( (\log n)/\gamma^2)$. We add an extra constraint that $y^{(k)}$ be orthogonal to $x^{(k)}$. This ensures that term 
$\sum_i (2 x_i(k) y_i(k)$ is always zero and so the energy rises by exactly $\gamma^2 \sum_i (y_i{(k)})^2$.
As this is only one additional linear constraint,  it can increase the dimension of $W^{(k)}$ by at most $1$, and we still have that $\dim(W^{(k)}) \leq (1-\delta/2) n_k$, as long as $n_k = \Omega(1/\delta)$. When $n_k$ becomes smaller, one can revert to the analysis above. 

It remains to show that the rounding satisfies the concentration property, which we do next.
\subsection{Isotropic updates imply concentration}
Let $X_i$ be the final rounded solution, and fix some linear function $S = \sum_i a_i X_i$. 
We will show that 
\[ \Pr[S - \E[S]  \geq t] \leq \exp \left( - \frac{t^2/\beta}{2 (\sum_i a_i^2(x_i-x_i^2) + Mt/3)} \right)  \]
for $\beta = 18 \eta$ which is  $20/\delta$ by  the choice of the parameters in the algorithm.
\begin{proof}By scaling of $a_i$'s and $t$, we can assume that $M=1$.
Let us define the random variable 
\[ Z_k = \sum_i a_i x_i^{(k)} + \lambda \sum_i  a_i^2 x_i^{(k)}(1-x_i^{(k)}),\] 
where  $\lambda \leq 1$ will be optimized later.

Initially, 
$Z_0 = \mu + \lambda v$ where $\mu = \sum_i a_i x_i^{(0)}$ and $v =  \sum_i a_i^2  x_i^{(0)}( 1-x_i^{(0)})$.

As $x^{(k)} = x^{(k-1)} + y^{(k)}$, a simple calculation gives that
\begin{eqnarray}
Y_k & =&  Z_k - Z_{k-1} = \sum_i a_i (x_i^{(k)} - x_i^{(k-1)}) + \sum_i \left(\lambda a_i^2 ( x_i^{(k)}(1- x_i^{(k)}) - x_i^{(k-1)}(1- x_i^{(k-1)}))   \right) \nonumber \\ 
&= & \sum_i a_i y_i^{(k)} + \sum_i \left(\lambda a_i^2 ( y_i^{(k)} (1- 2 x_i^{(k-1)}  -  y_i^{(k)})) \right).
 \label{eq:yk}
\end{eqnarray}
We now show that $Z_k$ satisfies the conditions of Lemma \ref{lem:prel2} for a suitable $\alpha$.

\begin{claim}
\label{bound-y}
For all $k$,  $|Y_k| \leq 1$ and $\|y^{(k)}\|_2 \leq \gamma n = (n^{-1/2})/2$.
\end{claim}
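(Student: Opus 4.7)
The plan is to prove the two inequalities in turn, deriving $|Y_k|\le 1$ from the $\ell_2$ bound on $y^{(k)}$ combined with the explicit formula \eqref{eq:yk}.

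For $\|y^{(k)}\|_2 \leq \gamma n$, I would use a direct spectral-norm argument. Recall $y^{(k)} = \gamma_k U^{1/2} r_k$ with $\gamma_k \leq \gamma$ and $r_k\in\{\pm 1\}^{n_k}$, so $\|r_k\|_2 = \sqrt{n_k}$. By Theorem \ref{thm:bg17}, $U$ is PSD with diagonal entries at most $1$, so $\|U\|_{\mathrm{op}} \leq \operatorname{Tr}(U) \leq n_k$ and hence $\|U^{1/2}\|_{\mathrm{op}} \leq \sqrt{n_k}$. Multiplying the three factors gives
\[
\|y^{(k)}\|_2 \;\leq\; \gamma \cdot \sqrt{n_k}\cdot \sqrt{n_k} \;\leq\; \gamma n \;=\; \tfrac{1}{2}n^{-1/2},
\]
which is the second bound.

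For $|Y_k|\le 1$, I work in the normalization $M = \max_i |a_i| = 1$ already assumed in the proof, and split \eqref{eq:yk} as $Y_k = L + Q$ with $L := \sum_i a_i y_i^{(k)}$ and $Q := \lambda \sum_i a_i^2\, y_i^{(k)}\bigl(1 - 2 x_i^{(k-1)} - y_i^{(k)}\bigr)$. A single Cauchy--Schwarz bounds the linear part: since $|a_i|\le 1$ gives $\|a\|_2\le\sqrt{n}$, we get
\[
|L| \;\leq\; \|a\|_2\,\|y^{(k)}\|_2 \;\leq\; \sqrt{n}\cdot \gamma n \;=\; \tfrac{1}{2}.
\]
For the quadratic-in-$y$ piece, the key observation is the rewrite $1 - 2 x_i^{(k-1)} - y_i^{(k)} = (1 - x_i^{(k)}) - x_i^{(k-1)}$; since both $x_i^{(k-1)}$ and $x_i^{(k)}$ lie in $[0,1]$, this factor is in $[-1,1]$. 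Combined with $a_i^2 \leq |a_i|$ (from $M=1$) and another Cauchy--Schwarz, this gives $|Q| \leq \lambda \|a\|_2\,\|y^{(k)}\|_2 \leq \lambda/2 \leq 1/2$. Adding the two estimates yields $|Y_k| \leq 1$.

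I do not expect any real obstacle: the whole statement is a routine consequence of Theorem \ref{thm:bg17} and the fact that the step size $\gamma = 1/(2n^{3/2})$ is evidently tuned so that $\sqrt{n}\cdot \gamma n = 1/2$. The one point worth highlighting is the rewrite $1 - 2 x_i^{(k-1)} - y_i^{(k)} = (1-x_i^{(k)}) - x_i^{(k-1)}$, which is what lets the quadratic term be bounded cleanly by $1/2$ rather than by something slightly larger involving an extra $|y_i^{(k)}|$ term.
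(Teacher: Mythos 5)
Your proposal is correct and follows essentially the same route as the paper: the rewrite $1-2x_i^{(k-1)}-y_i^{(k)}=(1-x_i^{(k)})-x_i^{(k-1)}\in[-1,1]$, the bound $a_i^2\le|a_i|$ with $\lambda\le 1$, and Cauchy--Schwarz against $\|y^{(k)}\|_2\le\gamma n$ are exactly the paper's steps. The only (harmless) difference is in the $\ell_2$ bound, where you use $\|U^{1/2}\|_{\mathrm{op}}\le\sqrt{\operatorname{Tr}(U)}\le\sqrt{n_k}$ and $\|r_k\|_2=\sqrt{n_k}$, while the paper applies the triangle inequality over the columns of $U^{1/2}$ (each of norm $\sqrt{U_{ii}}\le 1$) to get $\|U^{1/2}r_k\|_2\le\|r_k\|_1\le n$; both yield the same bound $\gamma n$.
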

\begin{subproof}
Recall that $y^{(k)} = \gamma U^{1/2} r_k$. 
As  $U_{ii} \leq 1$, the columns of $U^{1/2}$
have length at most $1$. Applying triangle inequality to the columns of $U^{1/2}$, we have 
$\|U^{1/2} r_k\|_2 \leq \|r_k\|_1$ which is at most $n$.
This gives that $\|y\|_2 \leq \gamma n$.

We now show that $|Y_k|\leq 1$. First we note that the second term in \eqref{eq:yk} is at most $\sum_i |a_i y_i^{(k)}|$. This follows as $|a_i|^2 \leq |a_i|$ (as $M=1$), $\lambda \leq 1$ and 
$1-2x_i^{(k-1)} - y_i^{(k)} \in [-1,1]$ (as $1-x_i^{(k-1)} \in [0,1]$  and 
$ x_i^{(k-1)} + y_i^{(k)} = x_i^{(k)} \in [0,1] $).
Applying Cauchy-Schwarz, and using the bound on $\|y^{(k)}\|_2$ above gives
$ |Y_k| \leq 2 \|a\|_2 \|y^{(k)}\|_2  \leq 2 n^{1/2}
\|y^{(k)}\|_2  \leq 2 \gamma n^{3/2} = 1.$
\end{subproof}

\begin{claim}
\label{cool-calc}
$  \E_{k-1}[Y_k] \leq - (\lambda/8 \eta)  \E_{k-1}[Y_k^2] $
\end{claim}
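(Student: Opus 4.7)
The plan is to compute $\E_{k-1}[Y_k]$ exactly and then to bound $\E_{k-1}[Y_k^2]$ by a small constant multiple of the same quantity, so that their ratio delivers the claimed factor $\lambda/(8\eta)$.

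First, since $y^{(k)} = \gamma_k U^{1/2} r_k$ with $r_k$ a uniform $\pm 1$ vector, we have $\E_{k-1}[y_i^{(k)}] = 0$ and $\E_{k-1}[y_i^{(k)} y_j^{(k)}] = \gamma_k^2 U_{ij}$. Plugging this into \eqref{eq:yk}, all linear-in-$y^{(k)}$ terms vanish in expectation and only the $-\lambda a_i^2 (y_i^{(k)})^2$ term contributes, giving
\[ \E_{k-1}[Y_k] \;=\; -\lambda \gamma_k^2 \sum_i a_i^2 U_{ii}. \]
So it suffices to show $\E_{k-1}[Y_k^2] \leq 8\eta \gamma_k^2 \sum_i a_i^2 U_{ii}$.

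Next I would decompose $Y_k = L + Q$, where $L = \sum_i c_i\, y_i^{(k)}$ with $c_i = a_i + \lambda a_i^2(1-2 x_i^{(k-1)})$ is the linear part and $Q = -\lambda \sum_i a_i^2 (y_i^{(k)})^2$ is the quadratic remainder. Since $\lambda \leq 1$, $|a_i| \leq M = 1$, and $|1-2x_i^{(k-1)}| \leq 1$, each $|c_i| \leq 2|a_i|$, so the sub-isotropy of $y^{(k)}$ from Theorem~\ref{thm:bg17} gives
\[ \E_{k-1}[L^2] \;\leq\; \eta \sum_i c_i^2 \,\gamma_k^2 U_{ii} \;\leq\; 4\eta \,\gamma_k^2 \sum_i a_i^2 U_{ii}. \]
For $Q$ a deterministic bound suffices: Cauchy--Schwarz together with $|r_{k,j}|=1$ and the row-norm bound $U_{ii}\leq 1$ gives $|y_i^{(k)}|\leq \gamma_k\sqrt{n}$, so $|Q| \leq \lambda \gamma_k^2 n \sum_i a_i^2 \leq \lambda/(4n)$ by the choice $\gamma = 1/(2n^{3/2})$. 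Hence $\E_{k-1}[Q^2] \leq (\lambda/(4n))\cdot \E_{k-1}[|Q|]$ is a vanishing fraction of the $L^2$ bound. Combining via $(L+Q)^2 \leq 2L^2 + 2Q^2$ then yields $\E_{k-1}[Y_k^2] \leq 8\eta \gamma_k^2 \sum_i a_i^2 U_{ii}$, and dividing by $-\E_{k-1}[Y_k]$ produces the advertised ratio.

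The main technical issue I anticipate is constant bookkeeping: the factor $4$ coming from $c_i^2 \leq 4 a_i^2$ times the factor $2$ from $(L+Q)^2 \leq 2L^2 + 2Q^2$ must telescope exactly to $8\eta$, and one has to verify that the residual $Q^2$ term (tiny, thanks to the deliberately small step $\gamma$) does not push the constant beyond $8\eta$. A cleaner alternative would be to handle the $LQ$ cross term directly via Cauchy--Schwarz and absorb $Q^2$ into a slightly worse constant; this still yields a ratio of order $\lambda/\eta$, possibly with a different numerical constant.
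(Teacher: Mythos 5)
Your proposal is correct and follows essentially the same route as the paper: compute $\E_{k-1}[Y_k]$ exactly using $\E_{k-1}[y_i^{(k)}]=0$, bound the second moment of the linear-in-$y^{(k)}$ part via sub-isotropy (the paper keeps the $\sum_i a_i y_i^{(k)}$ and $\lambda\sum_i a_i^2 y_i^{(k)}(1-2x_i^{(k-1)})$ contributions separate rather than merging them into your $c_i$), and kill the quadratic remainder using the small step size $\gamma$. The only bookkeeping point — which you yourself flag — is that the literal combination $(L+Q)^2\le 2L^2+2Q^2$ gives constant $8\eta+\lambda^2/(2n)$ rather than exactly $8\eta$; this is closed either by a weighted split $(1+\epsilon)L^2+(1+1/\epsilon)Q^2$, or as in the paper by bounding the quadratic term by $\tfrac12\sum_i a_i^2\E_{k-1}[(y_i^{(k)})^2]$ and absorbing it into $8\eta$ using $\eta\ge 1$.
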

\begin{subproof}
As $\E_{k-1}[y_i^{(k)}]=0$ for all $i$, and as $x_i^{(k-1)}$ is deterministic conditioned on the randomness until $k-1$,  taking expectations $\E_{k-1}[\cdot ]$ in \eqref{eq:yk} gives that
\begin{equation}
    \label{eq:mean}
 \E_{k-1}[Y_k] = -\lambda \sum_i a_i^2 \E_{k-1}[(y_i^{(k)})^2].
\end{equation}
We now upper bound $\E_{k-1}[Y_k^2]$. 
Using $(a+b)^2 \leq 2 a^2 + 2b^2$ twice for the expression in \eqref{eq:yk}, 
\begin{eqnarray}
Y_k^2  & \leq &  2 (\sum_i a_i y_i^{(k)})^2
+ 2   \lambda^2 \left(\sum_i a_i^2  y_i^{(k)} (1- 2  x_i^{(k-1)} - y_i^{(k)}) \right)^2 \nonumber \\
 & \leq &   2 (\sum_i a_i y_i^{(k)})^2 + 4 \lambda^2 \left( \left(\sum_i a_i^2 y^{(k)} (1- 2  x_i^{(k-1)}) \right)^2  + \left(\sum_i a_i^2  (y_i^{(k)})^2 \right)^2 \right)  \label{eq:star}
\end{eqnarray} 
As $|a_i| \leq 1$ and $\sum_i (y_i^{(k)})^2 \leq \gamma n \leq 1/2$ by Claim \ref{bound-y}, the third term can be bounded as 
\begin{equation}
\label{eq:star1}
 \left(\sum_i a_i^2  (y_i^{(k)})^2\right)^2 \leq  (1/2 ) \sum_i a_i^2  (y_i^{(k)})^2.
\end{equation}

Take expectations $\E_{k-1}[\cdot ]$ in \eqref{eq:star}, we upper bound the terms as follows.
As $y^{(k)}$ is $(a,\eta)$ sub-isotropic, by \eqref{eq:almost-indep}, the first term satisfies
\[\E_{k-1}[(\sum_i a_i y_i^{(k)})^2 ] \leq \eta \sum_i a_i^2 
\E_{k-1}[(y_i^{(k)})^2] \]
Similarly, and as $\lambda \leq 1$, the second term satisfies
\[ \E_{k-1}[(\sum_i a_i^2 y_i^{(k)} (1-2x_i^{(k-1)}))^2 ]\leq \eta \sum_i a_i^4 (1-2x_i^{(k-1)})^2  \E_{k-1}
[(y_i^{(k)})^2] \leq \sum_i a_i^2 \E_{k-1}
[(y_i^{(k)})^2],\]
where the last step uses that $|a_i|\leq 1$ and $|1-2x_i^{(k-1)}| \leq 1$.
To bound the third term, we use \eqref{eq:star1}.
Together this gives that,
 \begin{eqnarray*}
 \E_{k-1}[Y_k^2]  \leq   8 \eta \sum_i a_i^2 \E_{k-1}(y_i^{(k)})^2  =  - (8 \eta/\lambda)  \E[Y_k].
\end{eqnarray*}
As $\lambda,\eta>0$, this is the same as $\E_{k-1}[Y_k] \leq  - (\lambda/8 \eta)  \E[Y_k^2]$
\end{subproof}
By Claim \ref{cool-calc}, we can apply Lemma \ref{lem:prel2}  with $\alpha =  \lambda/8\eta$, provided the conditions for Lemma \ref{lem:prel2} are satisfied. Now, $\alpha \leq 1$ is satisfied  as $\lambda \leq 1$ and $\eta \geq 1$, so it remains to show that  $Y_k< 1$. 

By Lemma \ref{lem:prel2}, this gives that $ \Pr[Z_T - Z_0 \geq t] \leq \exp(-\lambda t/8\eta)$, or equivalently
\begin{equation}
    \label{eq:tail1}
    \Pr[ Z_T - \mu - \lambda v \geq  t] \leq \exp(- t \lambda/4\eta).
\end{equation}
Let $\lambda = t/(t + 2 v)$ (note this satisfies our assumption that $\lambda \leq 1$). Then $\lambda v \leq t/2$ and \eqref{eq:tail1} gives $ 
 \Pr[ Z_T - \mu \geq  3t/2] \leq \exp(- t \lambda/8\eta). $
Setting $t'=3t/2$ and the values of $\lambda,\eta$ gives
 \[ \Pr[Z_T  - \mu \geq t' ] \leq \exp\left(-\frac{t'^2/(18 \eta)}{2(v + t'/3)}  \right) \]
which gives the desired result. 
\end{proof}

\section{Applications}
\subsection{Rounding Column-Sparse LP}
\label{s:bf}
Let $x \in [0,1]^n$ be a fractional solution satisfying $Ax = b$,
where $A \in \R^{m \times n}$ is an arbitrary $m \times n$ matrix. Let $t = \max_{j\in [n]} (\sum_{i\in [m]} |a_{ij}|)$ be the maximum $\ell_1$ norm of the columns 
of $A$. Beck and Fiala \cite{BF81} gave a rounding method to find $X \in \{0,1\}^n$ so that the error row $|AX-b|_\infty \leq t$.
 
{\bf Beck-Fiala Rounding.} We first recall the iterated rounding algorithm of \cite{BF81}. Initially $x^{0}=x$.  
Consider some iteration $k$, and let $A_k$ denote the matrix $A$ restricted to the alive coordinates.
Call row $i$ {\em big} if its $\ell_1$-norm in $A_k$ is strictly more than $t$. The number of big rows is strictly less than $n_k$ as each column as norm at most $t$ and thus
the total $\ell_1$ norm of all entries $A_k$ is at most $tn_k$.
So the algorithm sets $W^{(k)}$ to be the big rows of $A_k$, and applies the iterated rounding update.

We now analyze the error. Fix some row $i$. As long as row $i$ is big, its rounding error is $0$ during the update steps. But when it is no longer big no matter how the remaining alive variables are rounded in subsequent iterations, the error incurred can be at most its $\ell_1$-norm, which is at most $t$.

{\bf Introducing Slack.} To apply Theorem \ref{thm:main}, we can easily introduce $\delta$-slack for any $0 \leq \delta < 1$, as follows. In iteration $k$, call a row big if its $\ell_1$ norm exceeds $t/(1-\delta)$,
and by the argument above the number of big rows is strictly less than $n_k(1-\delta)$.
This gives the following result.

\begin{theorem} 
\label{thm:bf} Given a matrix $A$ with maximum  $\ell_1$-norm of any column at most $t$, and any $x\in [0,1]^n$, then for any $0\leq \delta < 1$ the algorithm returns $X\in \{0,1\}^n$ such that $|A(X-x)|_\infty \leq t/(1-\delta)$, and $\E[X_i]=x_i$ and $X$ satisfies $O(1/\delta)$-concentration. 
\end{theorem}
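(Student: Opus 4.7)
The plan is to verify that the modified Beck--Fiala scheme sketched just before the theorem statement fits into the framework of Theorem~\ref{thm:main} with slack parameter exactly $\delta$, and then read off all three conclusions.

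First I would make the slack estimate precise. In iteration $k$, let $A_k$ denote the column-restriction of $A$ to the alive coordinates; since each column of $A$ (and hence of $A_k$) has $\ell_1$-norm at most $t$, the total $\ell_1$-mass of the entries of $A_k$ is at most $t n_k$. Calling a row of $A_k$ \emph{big} if its $\ell_1$-norm exceeds $t/(1-\delta)$, the number of big rows is strictly less than $tn_k/(t/(1-\delta)) = (1-\delta)n_k$. Taking $W^{(k)}$ to be the set of big rows of $A_k$, we have $\dim(W^{(k)}) \leq (1-\delta)n_k$, which is precisely the slack hypothesis of Theorem~\ref{thm:main}.

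Next, I would track the deterministic error on a fixed row $i$ throughout the rounding. As long as row $i$ is big, it lies in $W^{(k)}$, so the update $y^{(k)}$ (being in the null space of $W^{(k)}$) leaves the row-$i$ value of $A x^{(k)}$ unchanged. Once row $i$ ceases to be big, its $\ell_1$-norm restricted to the still-alive coordinates is at most $t/(1-\delta)$; so even in the worst case where every remaining alive variable in that row is eventually rounded as adversarially as possible from its current value in $[0,1]$ to $\{0,1\}$, the total perturbation to $(Ax)_i$ is bounded in absolute value by the $\ell_1$-norm of the surviving entries, hence by $t/(1-\delta)$. This yields the hard guarantee $|A(X-x)|_\infty \leq t/(1-\delta)$ with probability $1$.

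Finally, I would invoke Theorem~\ref{thm:main} directly: since the iterated rounding algorithm just described has slack $\delta$, the sub-isotropic rounding algorithm returns $X \in \{0,1\}^n$ satisfying, with probability one, every guarantee of the iterated rounding procedure (in particular the bound $|A(X-x)|_\infty \leq t/(1-\delta)$ just established), together with $\E[X_i]=x_i$ and $\beta$-concentration with $\beta = 20/\delta = O(1/\delta)$. No step is really the bottleneck: the counting argument for slack is essentially the original Beck--Fiala bookkeeping with the threshold $t$ replaced by $t/(1-\delta)$, and the concentration is immediate from the main theorem. The only point that deserves a sentence of care is confirming that the rounding error on a row is still controlled once the row exits $W^{(k)}$, since after that moment the algorithm is free to move along that row's coordinates; but the bound is given by the row's $\ell_1$-norm at the moment of exit, which is at most $t/(1-\delta)$ by definition.
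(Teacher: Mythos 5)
Your proposal is correct and follows essentially the same route as the paper: the same counting argument (total $\ell_1$-mass at most $tn_k$, so fewer than $(1-\delta)n_k$ rows exceed $t/(1-\delta)$) establishes the slack, the same bookkeeping bounds a row's error by its remaining $\ell_1$-norm at the moment it exits $W^{(k)}$, and Theorem~\ref{thm:main} then delivers $\E[X_i]=x_i$ and the $O(1/\delta)$-concentration. Nothing to add.
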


This implies the following useful corollary. 
\begin{corollary} Given a matrix $M$ with some collection of  rows $A$ such that the columns restricted to $A$ have $\ell_1$ norm at most $t$, then
say setting $\delta=1/2$, the rounding ensures at most error $2t$ for rows of $A$, while the error for other rows of $M$ is similar to that as for randomized rounding.
\end{corollary}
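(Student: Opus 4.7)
The plan is to simply invoke Theorem~\ref{thm:bf} applied only to the row-subsystem $A$, and then read off both guarantees from it. Concretely, I would run the sub-isotropic rounding algorithm using the Beck--Fiala style construction of $W^{(k)}$ defined with respect to $A$ alone (big rows being those whose $\ell_1$-norm, restricted to the alive coordinates, exceeds $t/(1-\delta)$). The rows of $M$ outside $A$ are not used to define $W^{(k)}$ at all; they simply ride along as linear functionals of the random vector $X$.

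With $\delta=1/2$, the first conclusion of Theorem~\ref{thm:bf} gives $|A(X-x)|_\infty \le 2t$ with probability $1$, which handles the rows in $A$. For the remaining rows of $M$, I would appeal to the second conclusion of Theorem~\ref{thm:bf}: the output $X$ satisfies $\E[X_i]=x_i$ and is $\beta$-concentrated with $\beta=O(1/\delta)=O(1)$. By the definition of $\beta$-concentration, for every row $c$ of $M$ (whether in $A$ or not) and every $t'>0$,
\[
\Pr\bigl[\,c\cdot X - \E[c\cdot X] \ge t'\,\bigr] \le \exp\!\left(-\frac{t'^2/\beta}{2\bigl(\sum_i c_i^2(x_i-x_i^2) + M_c t'/3\bigr)}\right),
\]
with $M_c = \max_i |c_i|$. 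For $\beta=O(1)$ this matches the Bernstein/Chernoff bound one would obtain from independent randomized rounding up to constants in the exponent, which is exactly the meaning of ``similar to randomized rounding.''

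There is essentially no obstacle beyond unpacking definitions: the two bullets of Theorem~\ref{thm:main} (as specialized in Theorem~\ref{thm:bf}) deliver the hard combinatorial guarantee on $A$ and the soft concentration guarantee on arbitrary linear functions simultaneously. The only minor point worth noting is that the $\beta$-concentration bound applies to any fixed linear functional, so if one wants a union bound over all rows of $M$ (e.g.\ an $\ell_\infty$-type statement), one pays the usual logarithmic factor, exactly as in the randomized rounding baseline. Hence the corollary follows immediately from Theorem~\ref{thm:bf} by choosing $\delta=1/2$ and observing that the concentration guarantee is uniform over all linear functionals, not only the rows used to define $W^{(k)}$.
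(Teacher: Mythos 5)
Your argument is exactly the paper's intended one: the corollary is stated as an immediate consequence of Theorem~\ref{thm:bf} with $\delta=1/2$, with the worst-case bound $2t$ for the rows of $A$ coming from the first conclusion and the ``randomized-rounding-like'' behaviour of the remaining rows coming from the $O(1)$-concentration of arbitrary linear functionals. Your unpacking of $\beta$-concentration and the remark about the union bound are correct and consistent with the paper.
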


\paragraph{Koml\'{o}s Setting.} 
For  a matrix $A$, let $t_2 = \max_{j \in [n]} (\sum_{i\in [m]} a_{ij}^2)^{1/2}$ denote the maximum $\ell_2$ norm of the columns of  $A$.
Note that $t_2 \leq t$ (and it can be much smaller, e.g.~if $A$ is $0$-$1$, $t_2= \sqrt{t}$).

The long-standing Komlos conjecture (together with a connection between hereditary discrepancy and rounding due to \cite{LSV86}) states that any $x \in [0,1]^n$ can be rounded to $X \in \{0,1\}^n$, so that  $|A(X-x)|_\infty = O(t_2)$.
Currently, the best known bound for this problem is $O(t_2 \sqrt{\log m})$ \cite{Bana98,BDG16}.

An argument similar to that for Theorem \ref{thm:bf} gives the following result in this setting.
\begin{theorem} If $A$ has maximum column $\ell_2$-norm  $t_2$, then given any $x\in [0,1]^n$, the algorithm returns $X\in \{0,1\}^n$ such that $|A(X-x)|_\infty \leq t_2 \sqrt{\log m}$, where $X$ also satisfies $O(1)$-concentration. 
\end{theorem}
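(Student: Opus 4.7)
My plan is to carry out an $\ell_2$-analogue of the Beck--Fiala argument behind Theorem~\ref{thm:bf}. At iteration $k$, I would call a row $i$ \emph{big} if the $\ell_2$-norm of $A_i$ restricted to the currently alive coordinates exceeds $c \cdot t_2$, for a constant $c>1$ to be chosen. Summing squared column-norms over alive coordinates gives $\sum_i \|A_i^{(k)}\|_2^2 = \sum_{j \in A_k}\|A^{(j)}_k\|_2^2 \leq n_k t_2^2$, so the number of big rows is at most $n_k/c^2$. Setting $W^{(k)}$ to be the span of these big rows yields slack $\delta = 1-1/c^2$; choosing $c=\sqrt{2}$ gives $\delta = 1/2$ and hence $\beta = O(1)$-concentration from Theorem~\ref{thm:main}.

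I would then bound the error on a fixed row $A_i$. Since alive sets only shrink, once row $i$ becomes non-big at some iteration $k_i^\star$ it stays non-big forever, and at $k_i^\star$ one has $\|A_i^{(k_i^\star)}\|_2 \leq ct_2$ and $\max_j |a_{ij}| \leq ct_2$ over alive coordinates. For $k < k_i^\star$ the update satisfies $A_i \cdot y^{(k)} = 0$ deterministically, so $A_i \cdot (X-x) = A_i \cdot (X - x^{(k_i^\star)})$. Conditional on the history through time $k_i^\star$, the remaining run is itself an instance of sub-isotropic rounding starting from $x^{(k_i^\star)}$; the $\beta$-concentration conclusion therefore applies verbatim with $x$ replaced by $x^{(k_i^\star)}$. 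Using the effective variance $v^\star = \sum_j a_{ij}^2\, x_j^{(k_i^\star)}(1-x_j^{(k_i^\star)}) \leq c^2 t_2^2/4$ and $M \leq ct_2$ (or $M\leq 1$ in the standard Koml\'os normalization) in Bernstein's inequality, and then union-bounding over the $m$ rows, gives $\|A(X-x)\|_\infty = O(t_2 \sqrt{\log m})$ with high probability.

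The main obstacle is transferring concentration from the starting vector $x$ to the random stopping time $x^{(k_i^\star)}$: as stated, Theorem~\ref{thm:main} is anchored at the initial $x$, whose variance $\sum_j a_{ij}^2 x_j(1-x_j)$ can be arbitrarily large because row $i$'s original $\ell_2$-norm is unbounded. I would handle this by observing that the supermartingale inequality in the proof of Theorem~\ref{thm:main} is local in time: its step-by-step estimate $\E_{k-1}[Y_k] \leq -(\lambda/8\eta)\E_{k-1}[Y_k^2]$ depends only on the current state and one-step update. Hence the same argument applied to the tail of the run starting at $k_i^\star$ yields exactly the Bernstein-type tail bound with $v$ replaced by $v^\star = O(t_2^2)$, and taking expectation over $x^{(k_i^\star)}$ preserves the estimate since $v^\star$ admits a deterministic worst-case bound through the big/non-big threshold. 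The only other subtlety is the regime where the $MT/3$ term in Bernstein's inequality would dominate the variance term; in that regime the clean $\sqrt{\log m}$ scaling is maintained by the standard assumption $|a_{ij}|\leq 1$, which can be normalized in at the outset.
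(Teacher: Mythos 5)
Your overall skeleton matches the paper's: define big rows by the restricted squared $\ell_2$-norm exceeding $2t_2^2$, get slack $\delta=1/2$ and hence $O(1)$-concentration from Theorem~\ref{thm:main}, note a row incurs zero error while big, and then apply the concentration bound starting from the (random) iteration $k_i^\star$ at which the row stops being big, finishing with a union bound over the $m$ rows. Your observation that the supermartingale argument is local in time, so the tail bound can be anchored at $x^{(k_i^\star)}$ with the deterministic variance bound $v^\star = O(t_2^2)$, is a legitimate (and slightly more careful) justification of the restart step that the paper performs implicitly.

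However, there is a genuine gap in how you treat the parameter $M=\max_j|a_{ij}|$ in Bernstein's inequality \eqref{eq:berns}. After row $i$ ceases to be big, its individual alive entries can still be as large as order $t_2$ (a single entry can carry essentially the whole restricted row norm), and no global normalization removes this: rescaling the matrix scales $t_2$ and the entries together, so assuming $|a_{ij}|\le 1$ just means $t_2\le \sqrt{m}$ and does not help. With $M\approx t_2$, $v^\star=O(t_2^2)$ and deviation $t=\Theta(t_2\sqrt{\log m})$, the exponent $t^2/(v^\star+Mt/3)$ is only $\Theta(\sqrt{\log m})$, giving failure probability $\exp(-\Omega(\sqrt{\log m}))$ per row, which is far too weak for a union bound over $m$ rows; pushed to failure probability $1/m$ this route only yields error $O(t_2\log m)$, not the claimed $O(t_2\sqrt{\log m})$. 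The missing idea, which is exactly how the paper closes this hole, is to split the alive coordinates of row $i$ at time $k_i^\star$ into the ``large'' ones $L=\{j: |a_{ij}|\ge t_2/\sqrt{\log m}\}$, of which there are at most $2\log m$ since $\sum_j a_{ij}^2\le 2t_2^2$, and the rest. The contribution of $L$ is bounded deterministically by Cauchy--Schwarz as $\sum_{j\in L}|a_{ij}|\le |L|^{1/2}\bigl(\sum_{j\in L}a_{ij}^2\bigr)^{1/2}=O(t_2\sqrt{\log m})$, and only the coordinates outside $L$, where now $M\le t_2/\sqrt{\log m}$, are fed into the concentration bound, which then gives exponent $\Omega(\log m)$ and survives the union bound. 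Without this split (or an equivalent truncation of the large entries), your argument does not deliver the stated $t_2\sqrt{\log m}$ bound.
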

\begin{proof} We will apply Theorem \ref{thm:main} with $\delta=1/2$. 
During any iteration $k$, call 
row $i$ big if its squared $\ell_2$ norm in $A_k$ exceeds $2(t_2)^2$. As the sum of squared entries of $A_k$ is at most $(t_2)^2 n_k$, the number big rows is at most $n_k/2$ and we set $W^{(k)}$ to $A_k$ restricted to the big rows.

The $O(1)$ concentration follows directly from Theorem \ref{thm:main}. To bound the error for rows of $A$, we argue as follows. Fix a row $i$. Clearly, row $i$ incurs zero error while it is big. Let $k$ be the first iteration when row $i$ is not big. 
Call $j$ large if $|a_{ij}| \geq t_2/\sqrt{\log m}$, and let $L$ denote the set of these coordinates. As $\sum_j a_{ij}^2 \leq 2t$, $|L| \leq 2\log m$. By Cauchy-Schwarz the rounding error due to coordinates in $L$ is at most 
\[ \sum_{j \in L} |a_{ij}| \leq |L|^{1/2} (  \sum_{j \in L} |a_{ij}|^2)^{1/2}  = O(t_2 \sqrt{\log m}).\]
Applying the $O(1)$-concentration property to the  solution $x^{(k)}$, the rounding error due to the entries not in $L$ in row $i$ satisfies
\[ \Pr\left[ \sum_{j \notin L}  a_{ij} (y_j-x^{(k)}_j)  = \Omega ( t_2 \sqrt{ \log m}) \right] 
 \leq   \exp\left(-\frac{(t_2 \sqrt{\log m})^2}{O( \sum_{j \notin L} a_{ij}^2  + Mt_2 \sqrt{\log m}  )}\right) \]
But as $\sum_{j \notin L} a_{ij}^2 \leq 2 t_2^2  $ and $M \leq \sqrt{t_2/\log m}$, this is $\exp(-\Omega(\log m))$. By choosing the constants suitably above, this can be made arbitrarily smaller than $1/m$ and the result follows by a union bound over the rows.
\end{proof}


\subsection{Makespan minimization on unrelated machines}
\label{s:makespan}
In the unrelated machine setting, there are $r$ jobs (we use $n$ for the number of fraction variables) and $m$ machines, and each job $j \in [r]$ has size $p_{ij}$ on a machine $i \in [m]$.
The goal is to assign all jobs to machines to minimize the maximum machine load.

{\bf LP Formulation.} The standard LP relaxation has fractional assignment variables $x_{ij} \in [0,1]$ for $j \in [r]$ and $i\in [m]$. Consider the smallest target makespan $T$ for which the following LP is feasible.
\begin{eqnarray*}
\sum_{j \in [r]} p_{ij} x_{ij}  \leq  T  \qquad & \forall i \in [m]  \qquad \text{(load constraints)}\\
\sum_{i\in [m]} x_{ij}  =  1    \qquad & \forall j \in [r]  \qquad \text{(assignment constraints)}
\end{eqnarray*}
Setting $x_{ij}=0$ if $p_{ij} >T$, which is a valid constraint for the integral solution, we can also assume that 
$p_{\max}:=\max_{ij} p_{ij}$ is at most $T$.
In a classic result, \cite{LST} gave a rounding that gives  makespan $T + p_{\max}$. 
We now  sketch the iterated rounding based proof of this result from \cite{Lau-book}.

{\bf Iterated rounding proof.} As always, we start with $x^{(0)}=x$ and fix the variables that get rounded to $0$ or $1$.
Consider some iteration $k$. Let $n_k$ denote the number of fractional variables, and let
 $R_k$ denote the set of jobs that are still not integrally assigned to some machine. For a machine $i$, define the excess as 
\begin{equation}
\label{eq:excess} e_i:= \sum_{j \in R_k: x^{(k)}_{ij}>0} (1-x^{(k)}_{ij}), 
\end{equation}
and note that 
$e_i$ is the maximum extra load
that $i$ can possibly incur (if all the non-zero variables are rounded to $1$). 
A nice counting argument \cite{Lau-book} shows that if $W^{(k)}$ consists of load constraints for machines with $e_i > p_{\max}$, and assignment constraints for jobs in $R_k$, then $\dim(W^{(k)}) < n_k$.

{\bf Introducing slack.} We now modify the argument of \cite{Lau-book} to have some slack and apply Theorem \ref{thm:main}. This will give the  following result.
\begin{theorem}
\label{thm:lst}
Given any $\delta \in [0,1/2)$, and a fractional solution $x$ to the problem, there is a rounding where the integral solution $X$ increases the load on any machine by $p_{\max}/(1-2\delta)$,  satisfies  $\E[X_{ij}]=x_{ij}$ for all $i,j$ and has $O(1/\delta)$ concentration.
\end{theorem}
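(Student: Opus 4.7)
The plan is to adapt the standard iterated rounding proof of Lenstra--Shmoys--Tardos (as outlined above following \cite{Lau-book}) by raising the threshold that decides which load constraints are kept in $W^{(k)}$. This creates slack proportional to $\delta$ in every iteration, so that Theorem~\ref{thm:main} can be applied as a black box.

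Concretely, in iteration $k$ I would take $W^{(k)}$ to consist of the assignment constraint $\sum_i x_{ij}=1$ for every $j\in R_k$ together with the load constraint $\sum_j p_{ij}\,x_{ij}=T$ for every machine $i$ whose excess satisfies $e_i^{(k)} > \tau$, where the threshold $\tau = \Theta(1/(1-2\delta))$ is chosen to balance the two effects below. The error bound mirrors the LST argument: at the iteration at which machine $i$ is dropped from $B_k$ (i.e.~$e_i^{(k)}$ has just fallen to at most $\tau$), the total extra load it can accrue during the remainder of the process is $\sum_{j} p_{ij}(1-x_{ij}^{(k)}) \le p_{\max}\cdot e_i^{(k)} \le p_{\max}\,\tau = O\!\bigl(p_{\max}/(1-2\delta)\bigr)$, since subsequent updates move variables to $\{0,1\}$ but preserve the iterated rounding invariants for every constraint still present in the later $W^{(k')}$.

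To verify that $\dim(W^{(k)}) \le (1-\delta) n_k$, I would use double counting on the bipartite graph $G_k$ with vertices $R_k \cup B_k$ whose $n_k$ edges correspond to the fractional variables. Each $j\in R_k$ has fractional degree $d_j \ge 2$ because the $x_{ij}^{(k)}$ sum to $1$ and all lie strictly in $(0,1)$, so $|R_k| \le n_k/2$. Each big machine $i$ has $d_i > \tau$, because each term $(1-x_{ij}^{(k)})$ contributing to $e_i^{(k)}$ is strictly less than $1$, so $|B_k| \le n_k/\tau$. Adding, $|W^{(k)}| \le n_k(\tfrac{1}{2} + \tfrac{1}{\tau}) \le (1-\delta)n_k$ once $\tau$ is chosen large enough in terms of $\delta$. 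Theorem~\ref{thm:main} then delivers a rounding $X$ that with probability $1$ meets the deterministic guarantee of load violation at most $p_{\max}/(1-2\delta)$, satisfies $\E[X_{ij}]=x_{ij}$, and is $O(1/\delta)$-concentrated for every linear function of the $X_{ij}$; this last property in particular handles all the ``soft'' side constraints one might add to the basic makespan LP.

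The main subtlety will be the exact constants in the counting: the clean double count above forces $\tau$ to be a small constant factor larger than $1/(1-2\delta)$, whereas the statement targets exactly $p_{\max}/(1-2\delta)$. Matching the claimed error bound requires sharpening the slack calculation slightly (for instance exploiting that whenever the minimum degree conditions on both sides hold with equality the bipartite graph $G_k$ must contain a cycle, which produces a linear dependency among the tight constraints and reduces $\dim(W^{(k)})$ by at least one additional unit). No essentially new idea beyond the threshold modification is needed, and the $k$-resource extension follows by treating each of the $k$ load constraints per machine exactly as above, with the same counting argument giving the same slack.
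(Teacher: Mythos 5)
Your algorithm is the same as the paper's (protect the assignment constraint of every fractionally assigned job and the load constraint of every machine whose \emph{excess} $e_i=\sum_{j:\,x^{(k)}_{ij}\in(0,1)}(1-x^{(k)}_{ij})$ exceeds a threshold, then invoke Theorem~\ref{thm:main}), and your error argument for a machine at the moment it stops being protected is the paper's argument. The gap is in the dimension count. You bound the number of protected machines by their \emph{degrees}: $d_i> \tau$ gives $|B_k|\le n_k/\tau$, and together with $|R_k|\le n_k/2$ this forces $\tfrac12+\tfrac1\tau\le 1-\delta$, i.e.\ $\tau\ge 2/(1-2\delta)$, so you only get load violation $2p_{\max}/(1-2\delta)$, a factor $2$ off the claim. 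Your proposed repair --- that tightness in both degree counts forces a cycle and hence one linear dependency --- cannot close this gap: it lowers $\dim(W^{(k)})$ by an additive $1$, whereas you need a constant-factor improvement in the bound on $|B_k|$.

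The paper's sharper count uses the excess itself rather than the degree: summing over \emph{all} machines, $\sum_i\sum_{j\in R_k:\,x^{(k)}_{ij}>0}(1-x^{(k)}_{ij})=n_k-r_k$, because each job $j\in R_k$ contributes $(\deg_j-1)$ thanks to $\sum_i x^{(k)}_{ij}=1$. Hence with threshold exactly $\tau=1/(1-2\delta)$ one gets $m_k\le(1-2\delta)(n_k-r_k)$, and adding $\delta$ times the inequality $2r_k\le n_k$ yields $m_k+r_k\le(1-\delta)n_k$. So the slack comes from charging protected machines against $n_k-r_k$ rather than $n_k$, which is exactly what lets the threshold (and hence the load violation) be $p_{\max}/(1-2\delta)$ rather than twice that. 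With this substitution your proof goes through; the rest of your write-up, including the expectation, the $O(1/\delta)$-concentration via Theorem~\ref{thm:main}, and the $k$-resource extension, matches the paper.
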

\begin{proof}
Without loss of generality, let us assume that $p_{\max}=1$.
Consider some iteration $k$,
and let $n_k$ denote the number of fractional variables $x^{(k)}_{ij} \in (0,1)$, and $R_k$ denote the jobs that are still not integrally assigned. Let $r_k=|R_k|$.
For a machine $i$, we define the excess $e_i$ as in \eqref{eq:excess}. Let $M_k$ denote the set of machines with $e_i > 1/(1-2 \delta)$.  

$W^{(k)}$ will consist of load constraints for machines in $M_k$ and assignment constraints for jobs in $R_k$. More precisely, the update $y^{(k)}_{ij}$ will satisfy:
$ \sum_j p_{ij} y^{(k)}_{ij} = 0$   for all $i \in [M_k]$ and 
$ \sum_i y^{(k)}_{ij}  = 0$, for all $j \in [R_k]$.
We say that $i$ is protected in iteration $k$ if $i \in M_k$. For a protected machine, the fractional load does not change on an update and its the excess can only decrease 
(when $x_{ij}$ reaches $0$ for some $j \in R_k$). 
So all machines in $M_k$ were also protected in previous iterations. When a machine ceases to be protected, the excess ensures that its extra load can be at most $p_{\max}/(1-2\delta)$.

It remains to show that $\dim(W_k) \leq (1-\delta) n_k$.
As each job in $R_k$ contributes at least two fractional variables to $n_k$, we first note that 
\begin{equation}
    \label{eq:rk-nk}2r_k \leq n_k
\end{equation} 
Next, we claim that
\begin{claim}
\label{cl:mknkrk}
$m_k \leq (1-2\delta) (n_k-r_k)$.
\end{claim} 
\begin{subproof}
Clearly $m_k/(1-2\delta) \leq \sum_{i \in M_k}  e_i$, as each  $i\in M_k$ has excess more than $1/(1-2\delta)$.
Next,
\[   \sum_{i \in M_k} e_i =   \sum_{i \in M_k} \sum_{j \in R_k: x^{(k)}_{ij}>0}  (1-x^{(k)}_{ij}) \leq   \sum_{i \in M} \sum_{j \in R_k: x^{(k)}_{ij}>0}  (1-x^{(k)}_{ij})  = n_k -r_k,\]
where the first equality uses the definition of $e_i$ and second uses the definition of $n_k$ and that for each job $j \in R_k$,  $\sum_{i  \in M} x^{(k)}_{ij} =1$.

Together this gives $m_k \leq (1-2\delta)(n_k-r_k)$.
\end{subproof}
Multiplying \eqref{eq:rk-nk} by $\delta$ and adding to the inequality in Claim \ref{cl:mknkrk} gives $  m_k + r_k \leq (1-\delta) n_k$, which implies the result as $\dim(W_k) \leq r_k +m_k$.
\end{proof}

{\em Remarks:} Setting $\delta = 0$ recovers the additive $p_{\max}$ result of \cite{LST}. Theorem \ref{thm:lst} also generalizes directly to $q$ resources, where job $j$ has load vector $p_{ij} = (p_{ij}(1),\ldots,p_{ij}(q))$ on machine $i$, and the goal is to find an
assignment $A$ to minimize $\max_h \max_i (\sum_{j: A(j)=i} p_{ij}(h))$.
A direct modification of the proof above gives a $q p_{\max}/(1-2\delta) $ error, with $O(\delta)$ concentration.
%

\subsection{Minimum cost degree bounded matroid basis}
\label{s:spanning-tree}
Instead of just the degree bounded spanning tree problem, we consider the more general matroid setting as all the arguments apply directly without additional work.

{\bf Minimum cost degree bounded matroid problem (DegMat).} 
The input is a matroid $M$ defined
on elements $V = [n]$ with costs $c:V \rightarrow \R^+$ and $m$ ``degree constraints” specified by $(S_j , b_j)$ for $j\in [m]$,
where
$S_j \subset [n]$ and $b_j \in \mathbb{Z}^+$. The goal is to find a minimum-cost base $I$ in $M$ satisfying the degree bounds, i.e.~$|I \cap S_j | \leq b_j$ for all $j \in [m]$. The matroid $M$
is given implicitly, by an independence oracle (which given a query $I$, returns whether $I$ is an independent set or not).

{\bf Iterated rounding algorithm.}
The natural LP formulation has the variables $x_i \in [0,1]$ for each element $i\in[n]$ and the goal is to minimize the cost $ \sum_i c_i x_i$, subject to the following constraints.
\begin{eqnarray*}
\sum_{i \in S} x_i & \leq & r(S)  \qquad \forall S \subseteq [n]   \qquad  \text{(rank constraints)}\\
\sum_{i \in V} x_i &= & r(V) \qquad \qquad \qquad  \text{(matroid base constraint)} \\
\sum_{i \cap S_j} x_i & \leq  & b_j  \qquad \forall j \in [m]  \qquad \text{(degree constraints)}
 \end{eqnarray*}
Here $r(\cdot)$ is the rank function of $M$.

Given a feasible LP solution with cost $c^*$,
\cite{KLS08,BKN09} gave an iterated rounding algorithm that finds a solution with cost at most $c^*$ and degree violation at most $q-1$. 
Here $q := \max_{i \in [m]} (\sum_{j \in [m]} |S_j \cap \{i\}|)$ is the maximum number of sets $S_j$ that any element $i$ lies in. Note that $q=2$ for
degree bounded spanning tree, as the elements are edges and $S_j$ consist of edges incident to a vertex.

We briefly sketch their argument. We start with $x^{(0)}=x$ and apply iterated rounding as follows.
Consider some iteration $k$. Let $A_k$ denote the set of fractional variables and let $n_k=|A_k|$. For a set $S_j$, define the excess as 
\begin{equation}
\label{eq:excess2} e_j:= \sum_{i \in A_k \cap S_j} (1-x^{(k)}_{i}), 
\end{equation}
the maximum degree violation for $S_j$ even if all current fractional variables are rounded to $1$.

Let $D_k$ be the set of indices $j$ of degree constraints with excess $e_j \geq q$.
The algorithm chooses $W^{(k)}$ to be the degree constraints in $D_k$ (call these protected constraints) and some basis for the tight matroid rank constraints.
An elegant counting argument then shows that $\dim(W_k) \leq n_k-1$. The correctness follows as if a degree constraint is no longer protected, then its excess is strictly below $q$, and by integrality of $b_j$ and the final rounded solution, the degree violation can be at most $q-1$.

{\bf Introducing slack.} We will modify the argument above in a direct way to introduce some slack, and then apply Theorem \ref{thm:main}.
This will imply the following.
\begin{theorem} For any $0 < \delta < 1$, there is  an algorithm that produces a basis with degree violation strictly less than $q/(1-2\delta)$, and satisfies $\Omega(\delta)$-concentration.
\end{theorem}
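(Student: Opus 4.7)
The plan is to mirror the template used for degree-bounded spanning trees: raise the protection threshold in the iterated rounding algorithm of \cite{KLS08,BKN09} from $q$ to $q/(1-2\delta)$, show that this creates $\delta n_k$ dimensions of slack at every iteration, and then invoke Theorem~\ref{thm:main}.

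Concretely, at iteration $k$ I take $W^{(k)}$ to be the span of (i) a maximal linearly independent laminar chain $\mathcal{L}_k$ of tight matroid rank constraints, obtained by the usual uncrossing, and (ii) the set $D_k$ of degree constraints $S_j$ whose current excess $e_j := \sum_{i\in A_k\cap S_j}(1-x_i^{(k)})$ is at least $q/(1-2\delta)$. The increment $y^{(k)}$ is drawn from the nullspace of $W^{(k)}$ using the sub-isotropic construction of Theorem~\ref{thm:main}. All tight rank constraints are preserved throughout, so $X$ is always a base of $M$ with cost at most $c\cdot x$ exactly as in the original algorithm; and once $S_j$ leaves $D_k$ its excess is strictly below $q/(1-2\delta)$, so any further rounding can add strictly less than $q/(1-2\delta)$ to $|X\cap S_j|$. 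The unbiasedness $\E[X_i]=x_i$ comes directly from Theorem~\ref{thm:main}.

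The heart of the argument --- and the main obstacle --- is verifying the slack condition $\dim(W^{(k)}) = |\mathcal{L}_k| + |D_k| \leq (1-\delta)n_k$. Since every element $i\in A_k$ lies in at most $q$ of the sets $S_j$,
\[ \ts |D_k|\cdot \frac{q}{1-2\delta} \;\leq\; \sum_{j\in D_k} e_j \;\leq\; q\sum_{i\in A_k}(1-x_i^{(k)}), \]
so $|D_k| \leq (1-2\delta)\sum_i (1-x_i^{(k)})$, a factor $(1-2\delta)$ improvement over the bound $|D_k|\leq \sum_i(1-x_i^{(k)})$ used inside the original $n_k-1$ counting of \cite{KLS08,BKN09}. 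The cleanest way to turn this saving into $\delta n_k$ slack is to combine it with the chain bound $|\mathcal{L}_k|\leq \sum_i x_i^{(k)}$, which follows from the strict integer jumps of rank along a tight laminar chain restricted to $A_k$; together these give $\dim(W^{(k)}) \leq n_k - 2\delta\sum_i(1-x_i^{(k)})$ and hence $\leq (1-\delta)n_k$ whenever $\sum_i(1-x_i^{(k)}) \geq n_k/2$. The complementary regime, in which most of the fractional mass is close to $1$, requires a symmetric tightening on the matroid side of the token count, or equivalently raising the threshold by a further constant factor so as to harvest $\delta n_k$ slack directly; sorting out this case cleanly is the main technical hurdle. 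Once the slack is established, Theorem~\ref{thm:main} returns an integral base $X$ satisfying the iterated-rounding guarantees with probability one, together with $\E[X_i]=x_i$ and $\beta = O(1/\delta)$ concentration for every linear functional of $X$, which is precisely the $\Omega(\delta)$-concentration claimed.
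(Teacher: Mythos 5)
You follow the paper's route (raise the protection threshold to $q/(1-2\delta)$, keep a chain of tight rank constraints plus the high-excess degree constraints, and count), but your verification of the slack condition has a genuine gap, which you yourself flag: your two bounds $|D_k| \leq (1-2\delta)\sum_{i\in A_k}(1-x_i^{(k)})$ and $|\mathcal{L}_k| \leq \sum_{i\in A_k} x_i^{(k)}$ only give $\dim(W^{(k)}) \leq n_k - 2\delta\sum_i(1-x_i^{(k)})$, which is vacuous when the fractional mass is close to $n_k$ (e.g.\ all $x_i^{(k)}$ near $1$), and "raising the threshold by a further constant factor" does not help there because the problem is the size of the chain, not of $D_k$.

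The missing ingredient is a second bound on the chain itself: consecutive tight sets in the chain satisfy $|C_{j+1}\setminus C_j| \geq 2$, because restricted to $A_k$ all variables are strictly fractional while the ranks $r_k(C_j)$ are integers, so a rank jump cannot be carried by a single fractional element (and similarly $|C_1|\geq 2$). Hence $c_k := |\mathcal{L}_k| \leq n_k/2$. With this, no case split is needed: from $\sum_i x_i^{(k)} \geq c_k$ you get $\sum_i (1-x_i^{(k)}) \leq n_k - c_k$, so $d_k \leq (1-2\delta)(n_k - c_k)$, and adding $2\delta$ times the inequality $c_k \leq n_k/2$ yields $d_k + c_k \leq (1-2\delta)n_k + 2\delta c_k \leq (1-\delta)n_k$ uniformly over all iterations. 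This is exactly how the paper closes the argument; the rest of your proposal (threshold choice, preservation of tight rank constraints, the excess argument for the $< q/(1-2\delta)$ violation, and invoking Theorem~\ref{thm:main} for unbiasedness and $O(1/\delta)$-concentration) matches the paper and is fine.
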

Setting $\delta < 1/6 - \epsilon$, and noting that $2/(1-\delta) < 3$, we get  the following.
\begin{corollary}
For minimum cost degree bounded spanning tree, given a fractional solution $x$ there is an algorithm to find a tree satisfying $+2$ degree violation and $O(1)$-concentration.
\end{corollary}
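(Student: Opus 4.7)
The corollary follows from the preceding Theorem by setting $q=2$ (each edge of a graph touches two vertices) and $\delta = 1/6-\epsilon$: the degree violation is then strictly less than $q/(1-2\delta) = 2/(2/3+2\epsilon) < 3$, hence at most $+2$ by integrality of $b_v$ and of the final solution, and $O(1/\delta)$-concentration is $O(1)$-concentration. So the actual task is to prove the Theorem. My plan is to mirror the slack-introduction used for makespan in Theorem \ref{thm:lst}: run the KLS/BKN iterated rounding exactly as in the excerpt except that a degree constraint $S_j$ is declared \emph{protected} at iteration $k$ iff $e_j > q/(1-2\delta)$ (rather than $e_j > q$). The subspace $W^{(k)}$ consists of a basis of the tight matroid rank constraints on $A_k$ -- taken, after uncrossing, to be a laminar family $\mathcal{L}^{(k)}$ -- together with the protected degree constraints. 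The standard worst-case argument then shows that once $S_j$ leaves protection its excess is at most $q/(1-2\delta)$, and integrality gives a final violation strictly less than $q/(1-2\delta)$, as required.

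The technical core is to verify the slack condition $\dim(W^{(k)}) \leq (1-\delta) n_k$ needed by Theorem \ref{thm:main}. Let $d_k=|D_k|$ and $r_k=|\mathcal{L}^{(k)}|$. I would use two standard facts about the laminar rank family. First, for each $S\in \mathcal{L}^{(k)}$ its set of \emph{private} alive elements (elements of $S$ lying in no $\mathcal{L}^{(k)}$-child of $S$) has size at least $2$: otherwise the rank constraint on $S$ would either be linearly dependent on its sub-constraints or force a private variable to integer, contradicting aliveness. Since the private sets are pairwise disjoint, this yields $2r_k \leq n_k$. Second, the ``reduced'' constraint on $\operatorname{private}(S)$ has the integer right-hand side $r(S) - \sum_{T\in \operatorname{ch}(S)} r(T)$, and this integer must be $\geq 1$ (otherwise all private elements would be zero, again contradicting aliveness). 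Summing over $\mathcal{L}^{(k)}$ gives $\sum_{i\in A_k} x_i^{(k)}\geq r_k$, i.e.\ $\sum_{i\in A_k}(1-x_i^{(k)}) \leq n_k - r_k$. Combined with the column-sparsity bound
\begin{equation*}
\frac{q}{1-2\delta}\cdot d_k \;<\; \sum_{j\in D_k} e_j \;\leq\; q\sum_{i\in A_k}(1-x_i^{(k)}) \;\leq\; q(n_k-r_k),
\end{equation*}
this gives $d_k < (1-2\delta)(n_k-r_k)$; using $r_k \leq n_k/2$ then yields
\begin{equation*}
d_k + r_k \;<\; (1-2\delta)\, n_k + 2\delta\, r_k \;\leq\; (1-\delta)\, n_k,
\end{equation*}
exactly the slack required to invoke Theorem \ref{thm:main} and obtain the $O(1/\delta)$-concentration on top of the iterated-rounding guarantee.

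The main obstacle is ensuring that the laminar-family machinery -- the ``two private alive elements per set'' lemma and the $\geq 1$ lower bound on the reduced rank RHS -- goes through unchanged when the protected degree constraints are simultaneously present in $W^{(k)}$. The key observation is that the uncrossing step only manipulates matroid rank constraints (it relies on submodularity of $r$) and can be performed independently of the degree constraints; moreover, both private-element arguments are local to $\mathcal{L}^{(k)}$ and only use linear independence of the whole $W^{(k)}$, not its precise content. Once this compatibility is nailed down, the remaining bookkeeping parallels the makespan proof almost verbatim.
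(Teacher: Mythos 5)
Your proposal is correct and follows essentially the same route as the paper: reduce the corollary to the preceding theorem with $q=2$ and $\delta<1/6$, protect a degree constraint only while its excess exceeds $q/(1-2\delta)$, and verify the slack $\dim(W^{(k)})\le(1-\delta)n_k$ from the two counting facts $2r_k\le n_k$ and $\sum_{i\in A_k}x^{(k)}_i\ge r_k$ combined with the column-sparsity bound on $\sum_{j\in D_k}e_j$. The only (immaterial) difference is that you span the tight rank constraints by a laminar family and count private alive elements and integral reduced right-hand sides, whereas the paper uses a chain family, the fact that consecutive chain sets differ in at least two fractional elements, and the strict increase of ranks along the chain; both yield exactly the same inequalities and the same conclusion $d_k+r_k\le(1-\delta)n_k$.
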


We now describe the argument. Consider iteration $k$. Let $A_k$ be the set of fractional variables and $n_k=|A_k|$. We need to specify how the choose $W^{(k)}$ and show that $\dim(W^{(k)}) \leq (1-\delta)n_k$.
Let $D_k$ denote the set of indices $j$ of degree constraints with excess $e_j \geq  q(1-2\delta)$.
Let $\cal{F}$ denote the family of the tight matroid constraints that holds with equality, i.e.~$\sum_{i \in S \cap A_k} x_i = r_k(S)$, where $r_k$ in the rank function of the matroid $M_k$ obtained from $M$ by deleting elements with $x_i=0$ and contracting those with $x_i=1$.
It is well-known that there is equivalent chain family $\mathcal{C} = \{C_1,\ldots,C_{\ell}\}$, with  $C_1 \subset C_2 \subset \cdots \subset  C_{\ell}$ of tight sets,
such that rank constraint of every $S \in \mathcal{F}$
lies in the linear span of the constraints for sets in $\mathcal{C}$.  Let $c_k = |\mathcal{C}|$ and $d_k=|D_k|$. We set $W^{(k)}$ to be degree constraints in $D_k$ and rank constraints in $\cal{C}$.

\begin{claim}
$\dim(W^{(k)}) \leq (1-\delta) n_k.$
\end{claim}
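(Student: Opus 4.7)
The plan is to mimic the token-counting argument that establishes the original bound $c_k+d_k\leq n_k-1$ in the work of \cite{KLS08,BKN09} (under the stricter excess threshold $e_j\geq q$) and to convert the relaxation of the threshold from $q$ down to $q(1-2\delta)$ into $\delta n_k$ units of extra slack in the counting. Throughout, I would use the shorthand $F = \sum_{i\in A_k}(1-x_i^{(k)})$ and $N = \sum_{i\in A_k} x_i^{(k)}$, so that $F+N=n_k$.

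First I would bound $d_k$. Since each element of $A_k$ lies in at most $q$ of the sets $S_j$, double-counting the excesses gives
\[ d_k\,q(1-2\delta) \ \leq \ \sum_{j\in D_k} e_j \ = \ \sum_{i\in A_k}(1-x_i^{(k)})\,\bigl|\{j\in D_k : i\in S_j\}\bigr| \ \leq \ qF, \]
so $d_k(1-2\delta)\leq F$. Next I would bound $c_k$ using the chain. For every slice $(C_{i+1}\setminus C_i)\cap A_k$, the $x$-values sum to the integer $r_k(C_{i+1})-r_k(C_i)\geq 1$ while each lies in $(0,1)$; this forces at least two fractional elements in the slice and makes both $\sum x \geq 1$ and $\sum(1-x)\geq 1$ on it. Summing over the $c_k$ slices yields the three bounds $c_k\leq n_k/2$, $c_k\leq N$, and $c_k\leq F$.

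The hard part is the final combination. A naive linear combination of $c_k\leq N$ with $d_k(1-2\delta)\leq F$ only gives $c_k + d_k(1-2\delta)\leq n_k$, equivalently $c_k+d_k\leq n_k+2\delta d_k$, which is too weak to yield $(1-\delta)n_k$. To close the gap I would split into two cases according to the size of $F$. When $F\leq \tfrac12(1-2\delta)^2\,n_k$, the degree bound gives $d_k\leq F/(1-2\delta)\leq (\tfrac12-\delta)n_k$, which combined with the chain bound $c_k\leq n_k/2$ immediately yields $c_k+d_k\leq (1-\delta)n_k$. In the complementary regime $F$ is large and $N$ is correspondingly small, so the tighter chain bound $c_k\leq N$ kicks in; here I would also exploit the integrality $|S_j\cap A_k|\geq \lceil q(1-2\delta)\rceil$ (which equals $q$ in the parameter regime $\delta<1/(2q)$ that the applications use, e.g.\ $\delta<1/6$ for the spanning-tree corollary) to sharpen the double-count of $D_k$ by one token per degree constraint and thereby avoid the $1/(1-2\delta)$ blow-up. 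This second case is the main technical obstacle, since one must carefully match the element count saved on the degree side against the shortfall on the chain side to extract exactly the required $\delta n_k$ of slack.
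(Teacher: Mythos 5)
There is a genuine gap, and it traces to the excess threshold you adopted. You define $D_k$ via $e_j\geq q(1-2\delta)$ (as the paper's definition line literally says), and then your double count only yields $d_k\,(1-2\delta)\leq F\leq n_k-c_k$, i.e.\ $d_k\leq (n_k-c_k)/(1-2\delta)$, which — as you correctly observe — is too weak. The paper's own proof works with the threshold $e_j\geq q/(1-2\delta)$ (this is also what its Theorem guarantees, a violation strictly less than $q/(1-2\delta)$, and it matches the analogous makespan argument where machines with excess above $p_{\max}/(1-2\delta)$ are protected; the $q(1-2\delta)$ in the definition of $D_k$ is a typo). With that threshold the identical double count gives the stronger bound $q\,d_k\leq (1-2\delta)\sum_{j\in D_k}e_j\leq (1-2\delta)\,q\,(n_k-c_k)$, i.e.\ $d_k\leq(1-2\delta)(n_k-c_k)$, and adding $2\delta$ times $c_k\leq n_k/2$ finishes in one line — no case analysis is needed.

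Your attempted repair does not close the gap: the case $F\leq\tfrac12(1-2\delta)^2 n_k$ is fine, but the complementary case is exactly where the claim is hard, and you leave it as an acknowledged obstacle. Moreover, under the weak threshold the quantities you control are consistent with a violation of the claim: nothing in the inequalities $d_k\leq F/(1-2\delta)$, $c_k\leq\min(n_k/2,\,N,\,F)$ rules out, say, $F\approx N\approx n_k/2$, $c_k\approx n_k/2$ and $d_k\approx n_k/\bigl(2(1-2\delta)\bigr)$, which gives $c_k+d_k>(1-\delta)n_k$; so no linear combination of these bounds (nor the one extra token per constraint from $|S_j\cap A_k|\geq\lceil q(1-2\delta)\rceil$, which only improves the count by $O(d_k)$ additive units of a single element each) can extract the missing $\delta n_k$ of slack. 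The correct move is not a case split but to redefine the protected set as $D_k=\{j: e_j\geq q/(1-2\delta)\}$; the correctness side of the argument (violation bounded by $q/(1-2\delta)$ once a constraint becomes unprotected) is unaffected, and the dimension count then goes through as above.
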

\begin{proof}
It suffices to show that $c_k + d_k \leq (1-\delta) n_k$.
As each $x_i$ is fractional and as $r_k(C)$ are integral, it follows that any two sets in chain family differ by at least two elements, i.e.~$|C_{i+1}\setminus C_i | \geq 2$.
 This implies that  $c_k \leq n_k/2$.  We also note that $r_k(C_1) < r_k(C_2) \cdots <$ and in particular the rank $r(C_{c_k})$ of the largest set in $\cal{C}$ is at least $c_k$. This gives that $ \sum_{i \in A_k} x_i \geq c_k$.
 
 Next, as $e_j \geq q/(1-2\delta)$ for each $j \in D_k$, we have $q d_k   \leq  (1-2 \delta)  \sum_{j \in D_k}  e_j$.
Moreover, by definition of $e_j$
\[ \sum_{j \in D_k}  e_j  =   \sum_{j \in D_k} \sum_{i \in A_k \cap S_j}  (1-x_i)   = \sum_{i \in A_k}  q_i(1-x_i) \]
where  $q_i = |\{j: i\in S_j, j \in D_k \}$ is the number of tight degree constraints in $D_k$ that contain element $i$.
As $q_i \leq q$, the above is at most $q \sum_{i \in A_k} (1-x_i)  \leq q n_k - q c_k$,
where we use that  $\sum_{i \in A_k} x_i \geq c_k$, and $\sum_{i \in A_k} 1 = |A_k|=n_k$. 

Together this gives, $d_k \leq  (1-2\delta)  (n_k-c_k)$, and adding $2\delta$ times the inequality $c_k \leq n_k/2$  to this gives that $d_k +  c_k \leq  (1-\delta) n_k$ as claimed.
\end{proof}

{\em Remark}: As the underlying LP has exponential size and implicit, some care is needed on how to maintain the chain family and compute the step size of the walk. These issues are discussed in \cite{BN16}.

\subsection{Multi-budgeted Matching}
\label{s:matching}
Consider a bipartite graph with $n$ vertices on each side, and let $x_e$ be a solution to the perfect matching polytope defined by the constraints $\sum_{e \in \delta(v)} x_{e} =1$ for all $v$.
If the support of $x$ is a cycle, then the tight constraints have rank exactly $2n-1$ and there is only one way to write $x$ as a convex combination of perfect matchings: consisting of all odd edges or all even edges in the cycle. In particular, this also shows that no concentration is possible without relaxing the perfect matching requirement.

We sketch an alternate proof of the following result of \cite{CVZ11}.
\begin{theorem}
\label{thm:cvz}
Given a fractional perfect bipartite matching, there is a rounding procedure that given any $\delta>0$, outputs a random matching where each vertex is matched  with probability at least $1-\delta$, each edge occurs with probability $(1-\delta) x_e$, and satisfies $O(1/\delta)$-concentration.
\end{theorem}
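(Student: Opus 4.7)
My plan is to apply Theorem~\ref{thm:main} to the bipartite matching rounding problem via an auxiliary slack formulation that creates the required slack in iterated rounding without sacrificing matching feasibility.

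I would introduce a slack variable $s_v \in [0,1]$ for every vertex $v \in V$ and augment the state of the iterated rounding to the pair $(x, s) \in [0,1]^{E \cup V}$ with initial values $x_e^{(0)} = (1-\delta) x_e$ and $s_v^{(0)} = \delta$. The augmented constraint $\sum_{e \in \delta(v)} x_e + s_v = 1$ then holds tightly at every vertex, and will be maintained by including it in $W^{(k)}$ whenever it still involves a fractional variable. After rounding, the integral $(X, S)$ satisfies $\sum_{e \in \delta(v)} X_e + S_v = 1$ with $S_v \in \{0, 1\}$, so $X$ is automatically a valid matching and vertex $v$ is matched exactly when $S_v = 0$. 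By the martingale property of the walk from Section~3, $\E[X_e] = (1-\delta) x_e$ and $\E[S_v] = \delta$, which yields both the edge marginal and the per-vertex matching probability claimed in Theorem~\ref{thm:cvz}.

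The key technical step is to verify that the slack condition $\dim W^{(k)} \leq (1-\delta') n_k$ holds throughout the execution for some $\delta' > 0$ (ideally $\delta' = \Omega(1)$, which would yield $O(1)$-concentration, stronger than required). I would partition active vertices by the status of their slack variable: those with $s_v \in (0,1)$ (call this $V_A$) contribute one fractional slack plus at least one fractional incident edge, and those with $s_v \in \{0,1\}$ (call this $V_B$) must have $s_v = 0$ (since $s_v=1$ forces all incident $x_e$ to be $0$) and at least two incident fractional edges (since an integer $x_e=1$ would force all others to $0$, contradicting fractionality). Counting each fractional edge at both its endpoints then gives $n_k \geq \tfrac{3}{2}|V_A| + |V_B|$, while $\dim W^{(k)} \leq |V_A| + |V_B|$, so the ratio is at most $2/3$ whenever $|V_A|$ is a constant fraction of the total, and Theorem~\ref{thm:main} then provides the concentration bound.

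The main obstacle is controlling the degradation of the slack once many slack variables have rounded to $0$. In the extreme where $|V_A| = 0$ and the remaining fractional support is a single cycle, every active vertex has $s_v=0$ and exactly two fractional edges, so $n_k = |V_B|$ and the ratio $\dim W^{(k)}/n_k$ approaches $1$, destroying the slack. To handle this I would augment the algorithm with a dynamic constraint-dropping rule: once the $V_B$ population threatens to dominate, selectively drop a $\delta$-fraction of tight vertex constraints from $W^{(k)}$ and allow the corresponding vertices to possibly become unmatched. A Freedman-type martingale argument via Lemma~\ref{lem:prel2} applied to the per-vertex matching sum, combined with the martingale property of the walk, would then bound the probability that any fixed vertex ends up unmatched by at most $\delta$, completing the proof.
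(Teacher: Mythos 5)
Your slack-variable reformulation and the counting in the regime where many $s_v$ are fractional are fine, and the rescaled start does deliver the stated marginals cleanly. But there is a genuine gap exactly where you flag the ``main obstacle'': once the slack variables have rounded to $0$, the fractional support degenerates to cycles and long degree-two paths, $\dim W^{(k)}/n_k$ tends to $1$, and everything then hinges on your unspecified ``dynamic constraint-dropping rule''. That rule is the entire content of the theorem, and your sketch does not address the two points that make such a rule work: (i) how the dropped constraints are chosen so that the residual system has slack $\Omega(\delta)$ at \emph{every} subsequent iteration, not just immediately after a drop; and (ii) why the drop probability for any \emph{fixed} vertex or edge does not accumulate over the possibly many iterations in which the support re-degenerates --- if each bad phase forces you to drop a fresh fraction of constraints, a fixed vertex can be hit with total probability far exceeding $\delta$, and a generic Freedman-type argument does not control this without a structural reason that each vertex is exposed to dropping only a bounded number of times. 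Also, your parenthetical hope of slack $\delta'=\Omega(1)$ (hence $O(1)$-concentration) is unattainable: when the support is a single cycle, any rounding that leaves each vertex unmatched with probability at most $\delta$ is nearly supported on the two perfect matchings of the cycle, and the alternating-edge linear function then admits only $\Theta(1/\delta)$-concentration, which is why both the paper and \cite{CVZ11} lose the factor $1/\delta$.

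The paper closes precisely this gap with a concrete scheme: with $t=\Theta(1/\delta)$, whenever the support graph contains a cycle or a degree-two path longer than $4t$, it drops every $4t$-th edge starting from a uniformly random offset. The random offset makes each edge's drop probability at most $O(\delta)$ per dropping step; the bookkeeping shows each edge can participate in at most two dropping steps over the whole run (after a drop the path splits into components of length at most $4t$, except a short prefix/suffix that can be involved once more), so the total failure probability per edge and per vertex stays $O(\delta)$; and Lemma~\ref{lem:path2} certifies that once no long degree-two path remains, $n_k \geq (1+\Omega(\delta))\dim W^{(k)}$, so Theorem~\ref{thm:main} applies with slack $\Omega(\delta)$ and gives the claimed $O(1/\delta)$-concentration. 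Your argument needs an analogous explicit dropping construction and this kind of per-edge accounting before it can be considered a proof.
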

 We begin with a simple graph theoretic lemma.
\begin{lemma}
\label{lem:path2}
Let $G=(V,E)$ be connected graph such that $G$ is not a cycle, and has 
no path $P = v_1,\ldots,v_{t}$ of $t$ vertices with each $v_i$ of degree exactly two.
Then for $t\geq 2$,
 $|E| \geq (1+(1/4t)) (d_2 + d_{\geq 3}),$ where $d_2$ (resp.~$d_{\geq 3}$) is the number of degree $2$ (resp. $ \geq 3$) vertices in $G$.
\end{lemma}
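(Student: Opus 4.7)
The idea is to analyze $G$ via its block structure. Let $B \subseteq V$ be the set of \emph{branching} vertices (those of degree $\neq 2$), and call a maximal path of degree-$2$ vertices a \emph{chain}. Since $G$ is connected and not a cycle, $B\neq\emptyset$: if $B=\emptyset$, every vertex has degree $2$ and connectedness would force $G$ to be a single cycle. The same hypothesis ensures that each chain is a genuine path whose two end-edges leave the chain to vertices of $B$; otherwise a chain would form an isolated cycle of degree-$2$ vertices, and by connectedness this cycle would be all of $G$. Let $c$ denote the number of chains and set $D_3 := \sum_{v:\,\deg v\geq 3}\deg v$, so that $D_3 \geq 3\,d_{\geq 3}$.

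I would first establish two counting bounds. (i) By the hypothesis no chain can contain $t$ or more degree-$2$ vertices, so $d_2 \leq (t-1)\,c$. (ii) Each chain contributes exactly two end-edges incident to $B$, and since each $v\in B$ has at most $\deg v$ incident edges in total, summing gives $2c \leq \sum_{v\in B}\deg v = d_1 + D_3$, where $d_1$ is the number of degree-$1$ vertices. Combining the two, $d_2 \leq \tfrac{t-1}{2}(d_1 + D_3)$.

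Handshaking then gives $|E| = \tfrac{1}{2}(d_1 + 2d_2 + D_3) = d_2 + \tfrac{1}{2}(d_1 + D_3)$. Setting $\Delta := D_3 - 2 d_{\geq 3} = \sum_{v:\,\deg v\geq 3}(\deg v - 2) \geq d_{\geq 3}$, the target inequality
\[
|E| \;\geq\; \Bigl(1 + \tfrac{1}{4t}\Bigr)(d_2 + d_{\geq 3})
\]
rearranges by straightforward algebra to $2t(d_1 + \Delta) \geq d_2 + d_{\geq 3}$. Substituting the chain bound $d_2 \leq \tfrac{t-1}{2}(d_1 + 2 d_{\geq 3} + \Delta)$ and simplifying reduces this to $(3t+1)(d_1 + \Delta) \geq 2t\,d_{\geq 3}$, which is immediate from $\Delta \geq d_{\geq 3}$ and $3t+1 \geq 2t$.

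The main obstacle is the structural step: verifying that every chain is a path with both endpoints in $B$ (the ``loop'' case where both end-edges meet at the same $v \in B$ is harmless, but an isolated degree-$2$ cycle would break the counting). This is exactly where both hypotheses -- connected and not a cycle -- are used, and once it is in place the rest is routine algebra from handshaking plus the two chain bounds (i) and (ii).
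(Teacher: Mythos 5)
Your proof is correct and follows essentially the same route as the paper: both decompose $G$ into maximal paths (chains) of degree-$2$ vertices, use the no-long-path hypothesis to bound each chain's length, count chain end-edges against $\sum_{v:\,\deg v\neq 2}\deg v$, and finish with the handshake identity. The differences are cosmetic only --- the paper phrases the chain count as a charging argument and closes by taking a convex combination of the bounds $|E|\geq(1+1/t)d_2$ and $|E|\geq d_2+\tfrac{3}{2}d_{\geq 3}$, whereas you do the equivalent algebra directly (with the slightly sharper constant $t-1$ in place of $t$).
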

\begin{proof}
Consider the decomposition of the edges of $G$ into  paths $P = u,v_1,\ldots,v_\ell,w$ where each $v_i$, $i\in \ell$ has degree two, and $u$ and $w$ have degree other than $2$ (we allow $u=w$ or $\ell=0$). Such a decomposition can be obtained by repeating the following. Start from some vertex of degree other than $2$ (which must exist as $G$ is not a cycle) and visiting previously unexplored edges until some vertex of degree other than $2$ is reached.

For each such path $P$, put a charge of $\ell/2$ on the edges $e=(u,v_1)$ and $e'=(v_{\ell},w)$. 
As $P$ has $\ell$ degree $2$ vertices, the total charge put is $d_2$.
As $e$ and $e'$ are contained in exactly one path $P$, the charge on each edge $e$ is at most $t/2$ (if $e=e'$, then the edge can get contribution twice, but this only happens if $\ell=0$). 
So $d_2 \leq (t/2) (d_1 + \sum_{j \geq 3} j d_j).$
As $2|E| = d_1 + 2d_2 + \sum_{j \geq 3} j d_j$,  this gives 
$|E| \geq d_2 (1+1/t)$. Multiplying this by $1/4$ and adding it to $3/4$ times the trivial bound $|E| \geq d_2 + (3/2) d_{\geq 3}$ gives the result.
\end{proof}
\begin{proof}{(\em Theorem \ref{thm:cvz})}
By Theorem \ref{thm:main}, to get $O(\delta)$-concentration it suffices to show how to create $\delta$-slack. We will show how to maintain while ensuring that any edge or vertex constraint is dropped over all the iterations with probability at most $\delta$. 

Consider some iteration $k$, and let $G_k$ be the graph supported on fractional edges of $x^{(k)}$. 
We assume that $G_k$ is connected, else each component can be considered separately.
Then $n_k =|E(G_k)|$ and $\dim(W_k) \leq d_2 + d_{\geq 3}$ as there are no constraints on degree $1$ vertices. 
To create slack, we will simply drop some small fraction of edges
whenever there is a long path of degree two vertices in $G_k$, and apply Lemma \ref{lem:path2}.
Let $t=\delta/8$. 

For a cycle of length $\ell$, the tight constraints have rank $\ell-1$, as they have the form $y_e - y_{e'}=0$, where $e$ and $e'$ are the edges adjacent to a vertex.
So if $G_k$ is a cycle of length $\ell \leq 8t$, we already have $\delta$-slack.
If $G$ is a cycle of length $\ell \geq 4t$, we pick a random edge and drop this and every consecutive $4t$-th edge. An edge is deleted with probability at most  $1/2t = \delta/2$, and the edges in $G_k$ will not be considered again for dropping after this step as the support will only consist of paths of length at most $4t$.

If $G_k$ is not a cycle and has some path $P$ of degree two vertices of length more than $4t$, we apply the following deletion step. Pick a random offset in $[1,4t]$ and drop that edge and every subsequent $4t$-th edge on $P$. Clearly, an edge is dropped with probability at most $1/(4t)$ in this step. We now show that the edges of $P$ can be considered at most once for deletion in future steps. After the dropping step above, 
$G_k$ breaks into components that are paths of length at most $4t$, and two or fewer other components containing the initial prefix or suffix of $P$ of length at most $4t$. In later iterations this prefix/suffix of $P$ could become a part of a degree two path of length more than $4t$, but after that deletion step these edges will for a component of size at most $4t$ and not considered for deletion anymore.
\end{proof}

\section{Concluding Remarks}
We gave a general approach to obtain concentration properties whenever iterated rounding can be applied with some slack. We also described some applications where iterated rounding can be applied with slack without losing much in the approximation guarantee. While it is quite easy to create slack in most applications of iterated rounding, one example where it not clear to us how to do this is the survivable network design problem for 
which Jain \cite{Jain01} gave a breakthrough $2$-approximation
using iterated rounding (see \cite{NRS10} for a simpler proof).
It would be interesting to see if $O(1)$-concentration can also be achieved here while maintaining an $O(1)$ approximation.

\section*{Acknowledgments}
We thank Shachar Lovett for finding an error in a previous version of the paper.

{ 
\bibliographystyle{plain}
{\small \bibliography{mybib}}
}

\section{Appendix}
\subsection{Tight example for Theorem \ref{thm:main}}
\label{a:tight}
The following simple example shows that the dependence in Theorem \ref{thm:main} cannot improved beyond constant factors.

Let $\delta=1/t$ for some integer $t$.
There are $n$ variables $x_1,\ldots,x_n$, partitioned into $n/t$ blocks $B_1,\ldots,B_{n/t}$ where each block $B_i$ has $t$ variables $ x_{(i-1)t+1}, \ldots, x_{it}$.
For each block we have
$t-1$ constraints given by $x_{(i-1)t+1} =  x_{(i-1)t+2} =  \ldots = x_{it}$, and hence there are $(t-1)(n/t) = n(1-\delta)$ constraints in total.
Given any starting feasible solution, as the algorithm proceeds the variables within a block evolve identically (in iteration $k$, set $W^{(k)}$ consists of the constraints for blocks whose variables are not yet fixed to $0$ or $1$). 
As all the variables in a block  are eventually to the same value, it is easily seen that the linear function $S=x_1+\cdots+x_n$ will only be $1/\delta$-concentrated, as opposed to $1$-concentrated under randomized rounding that rounds all the $n$ variables independently.

\subsection{Thin Spanning Trees}
\label{a:thintree}
Given an undirected graph $G=(V,E)$, a spanning tree $T$ is called $\alpha$-thin, if 
for each subset  $S \subset V$, the number of edges in $T$ crossing $S$ is at most $\alpha$ times that in $G$, i.e.~$\delta_T(S) \leq \alpha \delta_G(S)$.
The celebrated thin tree conjecture states that any $k$ edge-connected graph $G$ has an
$O(1/k)$-thin tree. This conjecture has received a lot of attention recently, due to its connection to the asymmetric TSP problem \cite{AGMOS10}. Despite the recent breakthrough on ATSP \cite{STV18}, the thin-tree conjecture remains open.

Since any $k$-edge connected graph has a packing of $k/2$ edge disjoint spanning trees, 
finding a $\beta/k$-thin tree in a $k$-edge connected graph is equivalent to the following. Given a fractional point $x$ in the spanning tree polytope, find a tree $X$ such that $\delta_X(S) = \beta \delta_x(S)$, where $\delta_x(S)$ is the fraction of edges across $S$ in $x$.
A cut-counting argument from \cite{AGMOS10} shows that any spanning tree $X$ that satisfies concentration satisfies $\beta=O(\log n/\log \log n)$, and by now several negative dependence based methods such as maximum-entropy sampling, swap-rounding, sampling from determinantal measures and Brownian walks \cite{AGMOS10,CVZ10,HO14,PSV17}, are known for sampling such a random tree.

For any fractional spanning tree $x$, the result of \cite{SL07} gives an integral spanning tree with degree of any vertex at most $1$ more than the ceiling of the fractional degree of that vertex. As any fractional spanning tree has vertex degree at most $1$, this implies that the tree produced satisfies $\beta=2$ for vertex cuts.
\end{document}